\newtheorem{theorem}{Theorem}[section]
\newtheorem{definition}{Definition}[section]
\newtheorem{remark}[theorem]{Remark}
\newtheorem{assumption}[theorem]{Assumption}
\newcommand{\blind}{1}
\begin{document}

\def\spacingset#1{\renewcommand{\baselinestretch}%
{#1}\small\normalsize} \spacingset{3}

\spacingset{1.45} 

\if1\blind
{
  \title{\bf Flexible Bivariate INGARCH Process With a Broad Range of Contemporaneous Correlation}
  \author{Luiza S.C. Piancastelli$^\sharp$\footnote{Email: \texttt{luiza.piancastelli@ucdconnect.ie}},\,\, Wagner Barreto-Souza$^\star$\footnote{Corresponding author. Email: \texttt{wagner.barretosouza@kaust.edu.sa}}\, and Hernando Ombao$^\star$\footnote{Email: \texttt{hernando.ombao@kaust.edu.sa}}\hspace{.2cm}\\
  	{\normalsize \it $^\sharp$School of Mathematics and Statistics, University College Dublin, Dublin, Ireland}\\
    {\normalsize \it $^\star$Statistics Program, King Abdullah University of Science and Technology, Thuwal, Saudi Arabia}}
  \maketitle
} \fi
\if0\blind
{
  \bigskip
  \bigskip
  \bigskip
  \begin{center}
    {\LARGE\bf Flexible Bivariate INGARCH Process With a Broad Range of Contemporaneous Correlation}
\end{center}
\medskip
} \fi

\bigskip
\addtocontents{toc}{\protect\setcounter{tocdepth}{1}}

\begin{abstract}
We propose a novel flexible bivariate conditional Poisson (BCP) INteger-valued Generalized AutoRegressive Conditional Heteroscedastic (INGARCH) model for correlated count time series data. Our proposed BCP-INGARCH model is mathematically tractable and has as the main advantage over existing bivariate INGARCH models its ability to capture a broad range (both negative and positive) of contemporaneous cross-correlation which is a non-trivial advancement. Properties of stationarity and ergodicity for the BCP-INGARCH process are developed.  Estimation of the parameters is performed through conditional maximum likelihood (CML) and finite sample behavior of the estimators are investigated through simulation studies. Asymptotic properties of the  CML estimators are derived. Additional simulation studies compare and contrast methods of obtaining standard errors of the parameter estimates, where a bootstrap option is demonstrated to be advantageous. Hypothesis testing methods for the presence of contemporaneous correlation between the time series are presented and evaluated. We apply our methodology to monthly counts of hepatitis cases at two nearby Brazilian cities, which are highly cross-correlated. The data analysis demonstrates the importance of considering a bivariate model allowing for a wide range of contemporaneous correlation in real-life applications.
\end{abstract}
{\it \textbf{Keywords}:} Asymptotics; Cross-correlation; Ergodicity; Stability theory; Multivariate count time series.


\section{Introduction}\label{intro}

There is an increasing interest in count time series models because of their applications especially during a pandemic where it is crucial to predict counts of cases or hospitalizations. Count time series data is collected and studied in many fields including public health, cybersecurity, criminology, and medical science. In this paper, our ultimate goal is to study the dynamics in hepatitis counts between cities in Brazil and to address forecast procedures. Univariate count time series models based on hidden Markov chains, INAR (INteger-valued AutoRegressive), and INGARCH (INteger-valued Generalized AutoRegressive Conditional Heteroskedastic) approaches, among others, were introduced and explored in several papers. For a general account on univariate count models, please see \cite{kedfok2002}, \cite{fok2011}, \cite{davetal2015}, and \cite{wei2018}. 

Our goal here is to develop a model for multivariate count time series that can capture cross-dependencies between the different components. Recent works have emerged on this topic, but several limitations have yet to be addressed. As noted by \cite{camtri1998}, \cite{junetal2011}, and \cite{kar2016}, models for multivariate count time series are rather sparse mainly due to the analytical and computational challenges. 

One popular approach to modeling multivariate count time series uses the idea of thinning operators \citep{stevan1979} that were introduced to statistics by \cite{lat1997}, \cite{pedkar2011}, \cite{karped2013}, \cite{pedkar2013a}, \cite{pedkar2013b}, and \cite{scoetal2014}. A limitation of these thinning-based models is that in general they are not able to model negative cross-correlation and the associated likelihood function is cumbersome. \cite{livetal2018} introduced a bivariate class of thinning-based count processes that accommodates both positive and negative cross-correlation, overcoming one of the main limitations of the previous models. However, the proposed model has Poisson distributed marginals and therefore is not adequate to deal with overdispersion, a phenomenon that occurs frequently in practice. In addition to that, it is argued by \cite{daretal2019} that estimation and forecasting are cumbersome under this model. In this last paper, the authors propose a bivariate thinning-based model with a flexible autocorrelation structure. Although point predictors have been discussed, there is a lack of more formal approaches to perform inference, where the same criticism on the cumbersome likelihood approach to the model by \cite{livetal2018} applies. Another challenge of such models is that is not clear how to generalize them to dimensions greater than 2. 

Another approach to analyzing multivariate count time series are the latent factor-based models. In a pioneering work, \cite{joretal1999} proposed a state-space multivariate count time series model, where the counts are assumed to be conditionally independent according to Poisson distributions given a common latent gamma process. This model is nicely motivated in a context involving counts of emergency room visits for 4 respiratory diseases, where the common underlying tendency for pollution to cause respiratory diseases is interpreted through the latent gamma process. Although this assumption makes sense in this problem, it is hard to justify only one common latent process in general practical situations. In this direction, \cite{junetal2011} introduced a multivariate count time series model based on latent factor processes, so obtaining a more flexible correlation structure for the multivariate counts than the model by \cite{joretal1999}. The authors assumed that the latent factors are independent first-order autoregressive Gaussian processes. 

Recently, \cite{wanwan2018} proposed a non-stationary multivariate count time series model driven by latent factors, where the counts are assumed to be conditionally independent Poisson distributed given the factors, as considered in \cite{joretal1999} and \cite{junetal2011}. The difference is how the factors are considered in \cite{wanwan2018}. More specifically, the factors are expressed as a linear combination of possibly low dimensional factors. Since the dynamic of the latent processes is not specified in that paper, in contrast as in \cite{joretal1999} and \cite{junetal2011}, the marginal distribution of the counts cannot be explicit. Further, the likelihood function is not available in an analytic form, which implies difficulties in performing model inference. To overcome this problem, the authors proposed a two-step estimation procedure by combining a pseudo-maximum likelihood approach by \cite{gouetal1984} to obtain estimates for the regression coefficients, and then calibration of the factor loading matrix is performed through eigenanalysis in the second step. A major shortcoming of this approach is that it is extremely complicated to perform prediction. Thus, this model cannot be readily applied to address public health needs to conduct forecasting. 

It is important to note that the correlation of the last three aforementioned papers is driven totally through the latent factors. In other words, the cross-covariance of contemporaneous counts equals the covariance of the associated contemporaneous factors; for instance, see Eq. (2) from \cite{wanwan2018}. This phenomenon happens because of the conditional independence of the counts given the factors, that is all source of correlation is fully captured only by the latent processes. That is, conditional on the latent sources, there is no additional source of cross-dependence between components.

Proposed by \cite{hei2003}, \cite{ferland2006} and \cite{fok2009}, the INteger-valued Generalized Autoregressive Conditional Heterocedastic (INGARCH) models are a popular and tractable alternative to model count time series. These models can be seen as an integer-valued counterpart of the GARCH models by \cite{boll1986}. A univariate Poisson INGARCH(1,1) model specifies that the time series of counts $\{Y_t\}_{t\in\mathbb Z}$ is defined by 
$Y_t|\mathcal F_{t-1} \sim \mbox{Poisson}({\lambda_t})$, $\lambda_t = \omega + \alpha \lambda_{t-1} + \beta Y_{t-1}$, 
where $\mathcal F_{t-1}$ is the $\sigma$-field generated by $\{Y_{t-1},Y_{t-2},\ldots\}$, $\omega>0$, $\alpha\geq0$, $\beta\geq0$. Alternatives to the Poisson assumption motivated several works in the literature, as the negative binomial \citep{zhu2010,chrfok2014}, infinitely divisible \cite{goncalves2015}, exponential family \citep{davis2016}, and mixed Poisson \citep{chrfok2015,bs2019} INGARCH models, to name a few. Further, the linearity assumption was relaxed in \cite{fok2011} and \cite{fok2012}, which introduced the log-linear and nonlinear INGARCH models.  

Although there is abundant work on univariate INGARCH models, multivariate extensions are still scarce in the field. The first work dealing with this topic is due to \cite{liu2012}, where a bivariate Poisson INGARCH model was proposed. This bivariate process was also studied by \cite{leeetal2018}, where asymptotic properties of estimators and a parameter-change test were addressed. Recently, \cite{cuizhu2018} introduced another Poisson bivariate count time series model allowing for both negative and positive contemporaneous correlation (also known as cross-correlation). A drawback of the model by \cite{liu2012} and \cite{leeetal2018} is that negative cross-correlation is not allowed in contrast with the model by \cite{cuizhu2018}. On the other hand, in the former, the supported range of cross-correlation is very limited. In these papers, we have two major problems regarding that range: (i) natural restriction of the parameter space due to the baseline bivariate discrete distribution; (ii) the parameter space (related to the cross-correlation parameter) of the baseline count distribution depends on the marginal means. These points imply that there is a severe limitation in the correlation between the count time series which these models can capture, limiting their practical applicability. Our proposed model deals with both issues since the parameter space of its correlation parameter is $\mathbb R$-valued (so it does not depend on the means) and our baseline distribution allows for a broad range of correlation. In Subsection \ref{other_ingarch}, we provide a detailed discussion on these restrictions and how our model overcomes them.

We propose a novel bivariate INteger-valued Generalized AutoRegressive Conditional Heteroscedastic (INGARCH) model for the statistical analysis of correlated count time series data. More specifically, we introduce and study a new flexible bivariate conditional Poisson (BCP) INGARCH model, which is mathematically tractable and whose main advantage over the existing bivariate count time series models, by \cite{liu2012}, \cite{leeetal2018}, and \cite{cuizhu2018}, is its ability to capture a broad range of both negative and positive contemporaneous correlation. Along with the paper, we argue that such a broad range is very important to model properly high correlated count time series. Besides, we derive the theoretical properties of the BCP-INGARCH model such as conditions to ensure stationarity and ergodicity, and asymptotics on the conditional maximum likelihood estimators, as well as a full discussion on the statistical modeling including point estimation, procedures to obtain standard errors, hypothesis testing on the presence of cross-correlation, and forecasting.

It is worth to mention two other related works about multivariate INGARCH models based on copulas by \cite{cuietal2019} 
and \cite{foketal2020}. Although these models can also be flexible regarding the contemporaneous correlation, this is difficult to assess since explicit forms for the correlation structure are not provided. This is further discussed in Subsection \ref{other_ingarch}.

The remainder of the paper is organized in the following way. In Section \ref{def_prop}, we define our proposed bivariate conditional Poisson INGARCH model, establish the properties of stationarity and ergodicity of the process, and compare it with existing bivariate INGARCH models. Section \ref{mle} is devoted to the estimation of the parameters via the conditional maximum likelihood method. Furthermore, we establish conditions to obtain consistency and asymptotic normality of the proposed estimators. Simulation studies are conducted to assess the finite-sample performance of the proposed estimators in Section \ref{simulation}. We also compare methods to obtain standard errors of the parameters, including asymptotic based methods and a bootstrap alternative. Hypothesis testing and simulated results involving the cross-correlation parameter are also addressed. A full data analysis of the bivariate counts of hepatitis which extracts the contemporaneous correlation in two nearby Brazilian cities is presented in Section \ref{application}. This empirical illustration demonstrates the importance of considering a bivariate model allowing for a wide range of contemporaneous correlation in real-life applications. Concluding remarks and future research are discussed in Section \ref{conclusion}.

\section{Bivariate conditional Poisson INGARCH process}\label{def_prop}

We begin by presenting the bivariate conditional Poisson (BCP) distribution introduced by \cite{berplu2004}. We say that a random vector $(Z_1,Z_2)$ follows a bivariate conditional Poisson distribution with parameters $\lambda_1,\lambda_2>0$ and $\phi\in\mathbb R$ if it satisfies the stochastic representation: $Z_1\sim\mbox{Poisson}(\lambda_1)$ and $Z_2|Z_1=z_1\sim\mbox{Poisson}(\mu_2e^{\phi z_1})$, where $\mu_2\equiv \lambda_2\exp\{-\lambda_1(e^\phi-1)\}$. We denote $(Z_1,Z_2)\sim\mbox{BCP}(\lambda_1,\lambda_2,\phi)$. Note that the marginal of $Z_2$ is not Poisson but mixed Poisson distributed. The mean and variance of $Z_2$ are given by $E(Z_2)=\lambda_2$ and  $\mbox{Var}(Z_2)=\lambda_2+\lambda_2^2\left\{\exp(\lambda_1(e^\phi-1)^2)-1\right\}$, respectively. As expected from the definition, $Z_2$ is overdispersed (variance greater than mean). Evidently, the marginal moments of $Z_1$ are obtained from the Poisson ones.

\begin{remark}
	In this paper, we develop a different parameterization of the BCP distribution with $\lambda_2$ being the marginal mean of $Z_2$, differently from \cite{berplu2004}. This will be important for the definition of our bivariate INGARCH process in terms of mean parameters and has the advantage of being easier to establish first-order stationarity of the bivariate count process.
\end{remark}

The joint probability function of $(Z_1,Z_2)$, say $p(x,y)\equiv P(Z_1=x,Z_2=y)$, is given by
\begin{eqnarray}\label{jointpf}
p(x,y)=\dfrac{\lambda_1^x\lambda_2^y}{x!y!}\exp\bigg\{-\lambda_1\left(1+y(e^\phi-1)\right)-\lambda_2\exp\left\{-\lambda_1(e^\phi-1)+\phi x\right\}+\phi xy\bigg\},
\end{eqnarray}
for $x,y\in\mathbb\{0,1,2,\ldots\}$. Joint moments for the BCP distribution are given in \cite{berplu2004}. The covariance between $Z_1$ and $Z_2$ is $\mbox{cov}(Z_1,Z_2)=\lambda_1\lambda_2(e^\phi-1)$ and therefore the correlation takes the form
\begin{eqnarray}\label{correlation}
\mbox{corr}(Z_1,Z_2)=(e^\phi-1)\sqrt{\dfrac{\lambda_1\lambda_2}{1+\lambda_2(e^{\lambda_1(e^\phi-1)^2}-1)}}.
\end{eqnarray}

\begin{remark}
	The parameter $\phi$ controls the dependence of the model. For $\phi=0$, $\phi>0$ and $\phi<0$, we have respectively independence, positive and negative correlations. Another remarkable point is that this parameter does not have restrictions depending on the means, in contrast with the previous bivariate models considered for constructing bivariate INGARCH models. This will enable us to deal with highly correlated bivariate count time series. 
\end{remark}

Figure \ref{bcp_corr} illustrates the features of the model. For fixed values of $\lambda_1$ and $\lambda_2$, the cross-correlation value is high when $\phi$ is close to 0, where positive small values of $\phi$ imply a high positive correlation while small negative values imply a high negative correlation. From this figure, we can see that the BCP distribution accommodates a wide range of cross-correlation for various values of the mean parameters.

\begin{remark}
	We now obtain explicitly the maximum and minimum points of (\ref{correlation}) as function of $\phi$, which can be expressed in terms of the Lambert function $W_k(x)$, with $k\in\mathbb Z$ and $x\in\mathbb R$ \citep{coretal1996}. This point has not been discussed in \cite{berplu2004}. By taking the first derivative of (\ref{correlation}) with respect to $\phi$ and equating 0, we obtain that $z e^z = e^{-1}(\lambda_2^{-1}-1)$, where $z = \lambda_1(e^\phi -1)^2 -1$. From the results given in \cite{coretal1996}, we have that $z=W_0(e^{-1}(\lambda_2^{-1}-1))$ when $\lambda_2\leq 1$; $W_0(\cdot)$ is known as the principal branch of the Lambert function. For $\lambda_2>1$, we obtain the real solutions $z=W_0(e^{-1}(\lambda_2^{-1}-1))$ and $z=W_{-1}(e^{-1}(\lambda_2^{-1}-1))$. Hence, explicit solutions in terms of $\phi$ can be obtained as well as the theoretical range of correlation.
\end{remark}

\begin{figure}
	\centering
	\includegraphics[width=0.6\textwidth]{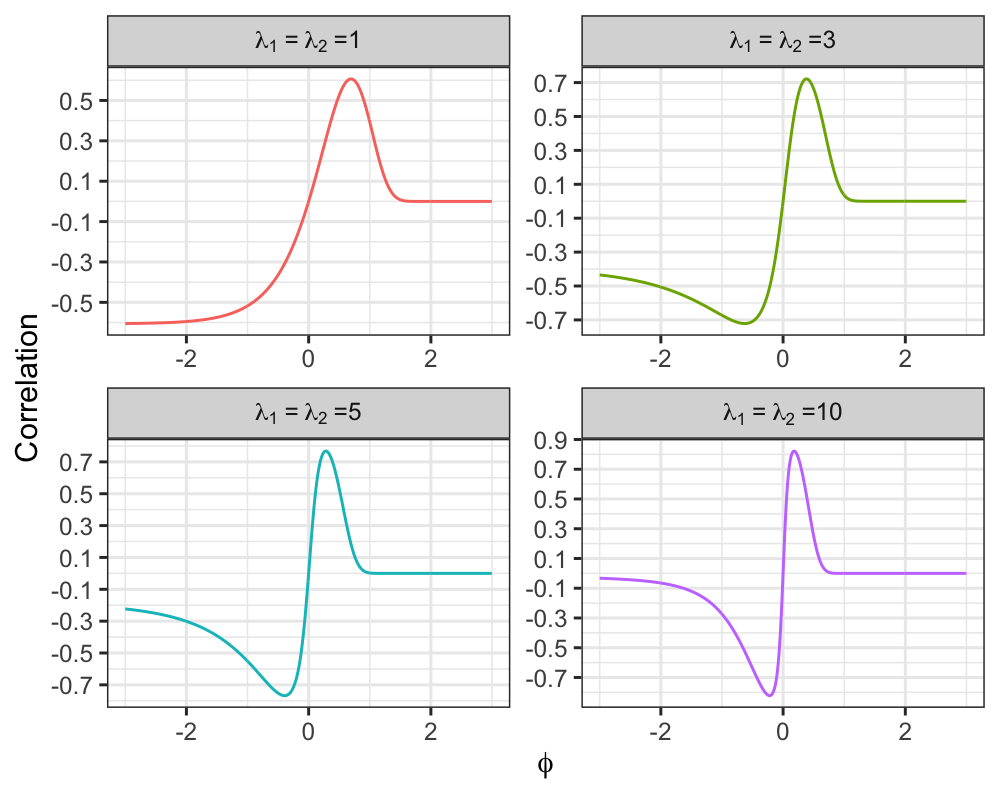}
	\caption{Cross-correlation for the bivariate conditional Poisson distribution as function of $\phi$ for some values of $\lambda_1$ and $\lambda_2$.}\label{bcp_corr}
\end{figure}

With the above bivariate conditional Poisson distribution, we can define our proposed bivariate count process with a flexible range of contemporaneous correlation as follows.

\begin{definition}\label{bcpingarch_def}(BCP-INGARCH process).
	Let ${\bf Y}_t=(Y_{t1},Y_{t2})^\top$ be a bivariate count time series where $t\geq 1$. We say that $\{\bf Y_t\}_{t\geq1}$ is a bivariate conditional Poisson INGARCH(1,1) process if it satisfies
	\begin{eqnarray}\label{BCP_equation}
	{\bf Y}_t|\mathcal F_{t-1}&\sim&\mbox{BCP}(\lambda_{1t},\lambda_{2t},\phi),\nonumber\\
	\boldsymbol{\lambda}_{t}\equiv E({\bf Y}_t|\mathcal F_{t-1})&=&\boldsymbol\omega+{\bf A}\boldsymbol\lambda_{t-1}+{\bf B}{\bf Y}_{t-1}, 
	\end{eqnarray}
	where $\mathcal F_{t-1}=\sigma(\bm Y_{t-1},\ldots,\bm Y_1,\bm\lambda_1)$, for $t\geq2$, $\boldsymbol\omega=(\omega_1,\omega_2)^\top\in\mathbb R_+^2$ is the intercept vector, ${\bf A}=\{\alpha_{ij}\}_{i,j=1,2}$ and ${\bf B}=\{\beta_{ij}\}_{i,j=1,2}$ are two $2\times2$ matrices with non-negative entries/parameters, $\phi\in\mathbb R$ is the contemporaneous dependence parameter, and $\bm Y_1|\mathcal F_0\sim \mbox{BCP}(\lambda_{11},\lambda_{21},\phi)$, with $\mathcal F_0=\sigma(\boldsymbol\lambda_1)$ and $\boldsymbol\lambda_1=(\lambda_{11},\lambda_{21})^\top$.
\end{definition}

\begin{remark}
	Note that we have defined above explicitly the conditional distribution of $\bm Y_1|\mathcal F_0$ following our baseline bivariate conditional Poisson distribution. In the existing bivariate INGARCH models, such an assumption is not mentioned but it is implicitly used for showing, for example, that the bivariate Markov process $\{\boldsymbol\lambda_t\}_{t\geq1}$ is an e-chain, which is, in particular, important to obtain desirable theoretical properties of the bivariate count process $\{{\bf Y}_t\}_{t\geq1}$.
\end{remark}

We have that $E({\bf Y}_t)=E(\boldsymbol{\lambda}_{t})=\boldsymbol\omega+{\bf A}E(\boldsymbol\lambda_{t-1})+{\bf B}E({\bf Y}_{t-1})$, and under first-order stationarity, we obtain $E({\bf Y}_t)=({\bf I}-{\bf A}-{\bf B})^{-1}\boldsymbol\omega$, for $t\geq1$, 	
where ${\bf I}$ is the identity matrix. Detailed discussion on the stationarity and ergodicity of the count process is provided in the following subsection.

\subsection{Stability theory}\label{ST}

We now introduce some matrix notations to state the main results on stability theory for the proposed bivariate count process. We follow the notation used in \cite{liu2012}. For a matrix $\bm J\in\mathbb C^{m\times n}$ and $p\in[1,\infty]$, we denote $\|\bm J\|_p=\max_{\bm x\neq0}\{\|\bm J \bm x\|_p/\|\bm x\|_p: \bm x\in\mathbb C^n\}$, with $\|\bm x\|_p=\left(\sum_{i=1}^n|x_i|^p\right)^{1/p}$ for $p\in[1,\infty)$ and $\|\bm x\|_\infty=\max_{1\leq i\leq n}|x_i|$ for $p=\infty$ being the $p$-norm of the vector $\bm x$. For $p=1$ and $p=\infty$, we have respectively that $\|\bm J \|_p=\max_{1\leq j\leq n}\sum_{i=1}^m|\bm J_{ij}|$ and $\|\bm J \|_\infty=\max_{1\leq i\leq m}\sum_{j=1}^n|\bm J_{ij}|$, with $\bm J_{ij}$ denoting the $(i,j)$-th element of $\bm J$, $i=1,\ldots,n$ and $j=1,\ldots,m$. Here, $\rho(\bm A)$ denotes the largest absolute eigenvalue of the matrix $\bm A$. Further, define $\mathcal H$ to be the class of continuous real functions with compact support on $[0,\infty)\times[0,\infty)$. We now have all the ingredients to establish an important result about the joint random bivariate vector $\{\bm \lambda_t\}_{t\geq1}$ as follows.

\begin{theorem}\label{e-chain}
	If $\|\bm A\|_p<1$ for some $p\in[1,\infty]$, then $\{\bm \lambda_t\}_{t\geq1}$ is an e-chain.
\end{theorem}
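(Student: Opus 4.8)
The plan is to work directly with the Markov transition operator of $\{\bm\lambda_t\}_{t\geq1}$ and to establish the defining equicontinuity property by an induction on the modulus of continuity of its iterates, with the contraction supplied by $\|\bm A\|_p<1$. Writing $\bm v=(v_1,v_2)^\top$, the one-step operator acts on a bounded measurable $g$ by $(Pg)(\bm v)=E\big[g(\bm\omega+\bm A\bm v+\bm B\bm Y)\big]$, where $\bm Y\sim\mbox{BCP}(v_1,v_2,\phi)$; since $P$ is non-expansive in the supremum norm, $\|P^ng\|_\infty\le\|g\|_\infty$ for all $n$. By definition it suffices to prove that for every $f\in\mathcal H$ the family $\{P^nf\}_{n\ge0}$ is equicontinuous on compact subsets of $[0,\infty)\times[0,\infty)$.

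First I would fix $f\in\mathcal H$, set $g=P^{n-1}f$, and compare $P^nf$ at two points $\bm u,\bm v$. The key device is to couple the two count vectors $\bm Y^{\bm u}\sim\mbox{BCP}(\bm u,\phi)$ and $\bm Y^{\bm v}\sim\mbox{BCP}(\bm v,\phi)$ through a maximal coupling. On the event that the counts coincide — which has probability $1-d_{\mathrm{TV}}\big(\mbox{BCP}(\bm u,\phi),\mbox{BCP}(\bm v,\phi)\big)$ — the two arguments of $g$ differ by exactly $\bm A(\bm u-\bm v)$, so their contribution is at most $\omega_g(\|\bm A\|_p\|\bm u-\bm v\|_p)$, where $\omega_g$ denotes the modulus of continuity of $g$; on the complementary event I bound the integrand crudely by $2\|g\|_\infty\le2\|f\|_\infty$. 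Letting $\omega_n$ be the modulus of continuity of $P^nf$ and $\eta(\delta)=\sup_{\|\bm u-\bm v\|_p\le\delta}d_{\mathrm{TV}}\big(\mbox{BCP}(\bm u,\phi),\mbox{BCP}(\bm v,\phi)\big)$, this yields the recursion $\omega_n(\delta)\le\omega_{n-1}(\|\bm A\|_p\delta)+2\|f\|_\infty\,\eta(\delta)$. The essential feature is that the count enters both terms with the \emph{same} value, so the autoregressive feedback is contracted by $\bm A$ alone and $\bm B$ never multiplies a count difference; this is exactly why only $\|\bm A\|_p<1$ is required.

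To close the induction I need a Lipschitz-in-total-variation estimate $\eta(\delta)\le C\delta$. I would obtain this from the stochastic representation of the $\mbox{BCP}$ law: couple the unit-rate Poisson processes generating $Z_1$ and, conditionally, $Z_2$, and combine the elementary bound $d_{\mathrm{TV}}(\mbox{Poisson}(a),\mbox{Poisson}(b))\le|a-b|$ with the smoothness of $\mu_2=\lambda_2\exp\{-\lambda_1(e^\phi-1)\}$ and of the conditional rate $\mu_2e^{\phi z_1}$ in $(\lambda_1,\lambda_2)$. Granting this, iterating the recursion gives $\omega_n(\delta)\le\omega_0(\|\bm A\|_p^{\,n}\delta)+2C\|f\|_\infty\sum_{k=0}^{n-1}\|\bm A\|_p^{\,k}\delta\le\omega_0(\delta)+\frac{2C\|f\|_\infty}{1-\|\bm A\|_p}\,\delta$, a bound uniform in $n$ which, since $f\in\mathcal H$ is uniformly continuous so that $\omega_0(\delta)\to0$, tends to $0$ as $\delta\to0$. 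Equicontinuity of $\{P^nf\}$, and hence the e-chain property, follows.

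I expect the main obstacle to be the control of continuity over an \emph{unbounded} region: $g=P^{n-1}f$ is evaluated at $\bm\omega+\bm A\bm v+\bm B\bm y$, which escapes to infinity as the count $\bm y$ grows, so the induction must be carried on a global modulus of continuity and the constant $C$ in $\eta(\delta)\le C\delta$ must be controlled uniformly there. This is more delicate than in the Poisson case because the $\mbox{BCP}$ marginal of $Z_2$ is overdispersed, with variance growing faster than the mean, so the total-variation Lipschitz constant is not obviously uniform in the mean parameters. I would address this by showing that $P^nf$ vanishes at infinity and by establishing the total-variation bound with a constant controlled on the sublevel sets actually visited by the chain, adapting the stability arguments of \cite{liu2012}.
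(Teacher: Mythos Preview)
Your overall architecture matches the paper's: a triangle-inequality (equivalently, maximal-coupling) split of $P^nf(\bm u)-P^nf(\bm v)$ into a term where the two arguments of $g=P^{n-1}f$ differ by $\bm A(\bm u-\bm v)$ and a residual term controlled by the total-variation distance between the two driving $\mbox{BCP}$ laws, followed by an induction in $n$ and a geometric sum in $\|\bm A\|_p$. The paper writes this as the add-and-subtract decomposition in its display (\ref{double_sum}) rather than in coupling language, but the resulting recursion is exactly the one you state, and the observation that $\bm B$ never multiplies a count \emph{difference} is the same.

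Where your proposal remains incomplete is precisely the step you flag as the ``main obstacle'': a bound $\eta(\delta)\le C\delta$ with $C$ independent of the location $(\bm u,\bm v)$. You propose to couple both $Z_1$ and, conditionally, $Z_2$, and then worry that the conditional rate $\mu_2 e^{\phi z_1}$ grows with $z_1$, so the Lipschitz constant picked up from the second coordinate need not be uniform; your suggested fixes (vanishing at infinity, sublevel-set control) are left at the level of a sketch. The paper's device is to avoid the second coordinate entirely. Factoring $p_{\bm u}(m,n)=p_{\bm u}(n\mid m)\,p_{u_1}(m)$ and bounding $|p_{\bm u}(m,n)-p_{\bm v}(m,n)|$ by $(p_{\bm u}(n\mid m)+p_{\bm v}(n\mid m))\,|p_{u_1}(m)-p_{v_1}(m)|$, the sum over $n$ collapses the conditional factors to the constant $2$, leaving only $\sum_m|p_{u_1}(m)-p_{v_1}(m)|$, i.e.\ the total variation between the \emph{Poisson} marginals of $Z_1$; see the paper's inequalities (\ref{aux1})--(\ref{aux2}) and the Remark that follows. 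This yields the location-free estimate $\sum_{m,n}|p_{\bm u}(m,n)-p_{\bm v}(m,n)|\le 4(1-e^{-2\|\bm u-\bm v\|_p})\le 8\|\bm u-\bm v\|_p$, after which your geometric-series iteration closes verbatim to give the uniform bound $\epsilon'+8\delta/(1-\|\bm A\|_p)$. The overdispersion of $Z_2$ that concerned you never enters, because the argument uses only that the first marginal is Poisson; no sublevel-set or decay-at-infinity argument is needed.
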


\begin{proof}
	By $\{\bm \lambda_t\}_{t\geq1}$ be an e-chain it means that for any function $g\in\mathcal H$ and for every $\epsilon>0$, there exists $\eta>0$ such that $\|\bm x-\bm z\|<\eta$ implies $\big|E\left(g(\bm\lambda_k)|\bm\lambda_0=\bm x\right)-E\left(g(\bm\lambda_k)|\bm\lambda_0=\bm z\right)\big|<\epsilon$ for all $k\geq1$, where $\|\cdot\|$ is some norm and $\bm z,\bm x\in(0,\infty)\times (0,\infty)$. 
	
	Let $\epsilon>0$, $\bm z_1,\bm x_1\in(0,\infty)\times (0,\infty)$, and $g\in\mathcal H$. Let us assume that $|g|<1$  without loss of generality (since its support is compact). We begin by dealing with the case $k=1$. 
	
	Denote $p_{\bm \lambda}(m,n)$ be the joint probability function of a bivariate conditional Poisson given by (\ref{jointpf}) with mean vector $\bm \lambda$ and correlation parameter $\phi$. By using the triangle inequality, we have that
	\begin{eqnarray}\label{double_sum}
	&&	\big|E\left(g(\bm\lambda_1)|\bm\lambda_0=\bm x_1\right)-E\left(g(\bm\lambda_1)|\bm\lambda_0=\bm z_1\right)\big|=\nonumber\\
	&&	\sum_{m=0}^\infty\sum_{n=0}^\infty\big|g(\bm\omega+\bm A\bm x_1+\bm B (m,n)^\top)p_{\bm x_1}(m,n)-g(\bm\omega+\bm A\bm z_1+\bm B (m,n)^\top)p_{\bm z_1}(m,n)\big|\leq\nonumber\\
   	&&	\sum_{m=0}^\infty\sum_{n=0}^\infty p_{\bm x_1}(m,n)\big|g(\bm\omega+\bm A\bm x_1+\bm B (m,n)^\top)-g(\bm\omega+\bm A\bm z_1+\bm B (m,n)^\top)\big|+\nonumber\\
	&&	\sum_{m=0}^\infty\sum_{n=0}^\infty\big|g(\bm\omega+\bm A\bm z_1+\bm B (m,n)^\top)\big| \big|p_{\bm x_1}(m,n)-p_{\bm z_1}(m,n)\big|.
	\end{eqnarray}	

Let us now to find a superior bound for the second double summation in (\ref{double_sum}). Let $p_{\bm x_1}(n|m)$ and $p_{\bm x_{11}}(m)$ be the probability functions of a Poisson distribution with respective means $x_{12}\exp\{\phi m-x_{11}(e^\phi-1)\}$ and $x_{11}$ (defined similarly for $p_{\bm z_1}(n|m)$ and $p_{\bm z_{11}}(m)$). Then, it follows that 
\begin{eqnarray}\label{aux1}
&&\big|p_{\bm x_1}(m,n)-p_{\bm z_1}(m,n)\big|=\big|p_{\bm x_1}(n|m)p_{\bm x_{11}}(m)-p_{\bm z_1}(n|m)p_{\bm z_{11}}(m)\big|\leq\\
&& \max\{p_{\bm x_1}(n|m),p_{\bm z_1}(n|m)\}\big|p_{\bm x_{11}}(m)-p_{\bm z_{11}}(m)\big|\leq(p_{\bm x_1}(n|m)+p_{\bm z_1}(n|m)) \big|p_{\bm x_{11}}(m)-p_{\bm z_{11}}(m)\big|,\nonumber
\end{eqnarray}
where we have used in the first inequality the fact that if $a,b,c,d\geq0$ then $|ab-cd|\leq|a-c|\max(b,d)$.
Using these results and the fact that $|g|<1$, we have that the second double summation in (\ref{double_sum}) is bounded above by
\begin{eqnarray}\label{aux2}
	\sum_{m=0}^\infty\sum_{n=0}^\infty\big|p_{\bm x_1}(m,n)-p_{\bm z_1}(m,n)\big|&\leq& 2 \sum_{m=0}^\infty \big|p_{\bm x_{11}}(m)-p_{\bm z_{11}}(m)\big|\\
   &\leq& 4(1-e^{-|x_{11}-z_{11}|})\leq 4(1-e^{-2\|\bm x_1-\bm z_1\|_p}),\nonumber
\end{eqnarray}	
where the second inequality is due to \cite{liu2012}, and the third one follows by using that $|x_{11}-z_{11}|\leq\|\bm x_1-\bm z_1\|_1\leq 2^{1-1/p}\|\bm x_1-\bm z_1\|_p\leq 2\|\bm x_1-\bm z_1\|_p$, for all $p\in[1,\infty]$.

The superior bound for the first double summation in (\ref{double_sum}) follows as discussed by \cite{liu2012}. Since $g$ is a continuous function, we can take $\epsilon'>0$ and $\eta>0$ sufficiently small such as $\epsilon'+\dfrac{8\eta}{1-\|A\|_p}<\epsilon$ and $|g(\bm x_1)-g(\bm z_1)|<\epsilon'$ whenever $\|\bm x_1-\bm z_1\|_p<\eta$
for some $p\in[1,\infty]$. Hence, it follows that $\|\bm\omega+\bm A\bm x_1+\bm B(m,n)^\top-(\bm\omega+\bm A\bm z_1+\bm B(m,n)^\top)\|_p=\|\bm A(\bm x_1-\bm z_1)\|_p\leq \|\bm A\|_p\|\bm x_1-\bm z_1\|_p\leq \|\bm x_1-\bm z_1\|_p\leq \eta$, where the second inequality follows by the assumption that $\|\bm A\|_p<1$. This implies that $\big|g(\bm\omega+\bm A\bm x_1+\bm B (m,n)^\top)-g(\bm\omega+\bm A\bm z_1+\bm B (m,n)^\top)\big|<\epsilon'$ and therefore
\begin{eqnarray*}
		\sum_{m=0}^\infty\sum_{n=0}^\infty p_{\bm x_1}(m,n)\big|g(\bm\omega+\bm A\bm x_1+\bm B (m,n)^\top)-g(\bm\omega+\bm A\bm z_1+\bm B (m,n)^\top)\big|\leq \epsilon', 
\end{eqnarray*}	
since $\sum_{m=0}^\infty\sum_{n=0}^\infty p_{\bm x_1}(m,n)=1$. By combining the above results, we get
\begin{eqnarray}\label{case_k=1}
\big|E\left(g(\bm\lambda_1)|\bm\lambda_0=\bm x_1\right)-E\left(g(\bm\lambda_1)|\bm\lambda_0=\bm z_1\right)\big|\leq \epsilon'+4(1-e^{-2\|\bm x_1-\bm z_1\|_p}).
\end{eqnarray}

Using the case $k=1$ provided in (\ref{case_k=1}) and induction for general $k$, the same steps of \cite{liu2012} (Chapter 4, page 109) yield that
\begin{eqnarray*}
&&\big|E\left(g(\bm\lambda_k)|\bm\lambda_0=\bm x_1\right)-E\left(g(\bm\lambda_k)|\bm\lambda_0=\bm z_1\right)\big|\leq \epsilon'+4\sum_{s=0}^{k-1}(1-e^{-2\|\bm A\|_p^s\|\bm x_1-\bm z_1\|_p})\leq\\
&& \epsilon'+4\sum_{s=0}^{\infty}(1-e^{-2\|\bm A\|_p^s\|\bm x_1-\bm z_1\|_p})\leq \epsilon'+8\|\bm x_1-\bm z_1\|_p\sum_{s=0}^{\infty}\|\bm A\|_p^s\leq \epsilon'+\dfrac{8\eta}{1-\|\bm A\|_p}\leq \epsilon,
\end{eqnarray*}
where we have used that $1-e^{-x}\leq x$ for all $x\geq0$ in the third inequality and that $\|\bm A\|_p<1$ in the forth inequality. This completes the proof of the desired result.
\end{proof}

\begin{remark}
	A key ingredient to establish the e-chain property of $\{\bm \lambda_t\}_{t\geq1}$ is inequality (\ref{aux2}). Our approach given in (\ref{aux1}) only uses that one of the marginals (of the baseline bivariate distribution) is Poisson distributed. We mean the result holds whatever is the conditional distribution $p_\cdot(m|n)$. Therefore, the argument used here is simpler and more general than those used in \cite{liu2012}, \cite{cuizhu2018}, and \cite{cuietal2019}, where other bivariate Poisson distributions are considered.
\end{remark}

With Theorem \ref{e-chain} at hand, we can use the results given in \cite{liu2012} to obtain conditions ensuring stationarity and ergodicity for $\{(\bm Y_t,\bm \lambda_t)\}_{t\geq1}$. Under the conditions $\rho(\bm A+\bm B)<1$ and $\|\bm A\|_p<1$ for some $p\in[1,\infty]$, $\{(\bm Y_t,\bm \lambda_t)\}_{t\geq1}$ has a unique stationary solution. If $\|\bm A\|_p+2^{1-1/p}\|\bm B\|_p<1$ for some $p\in[1,\infty]$,  $\{(\bm Y_t,\bm \lambda_t)\}_{t\geq1}$ has a unique stationary and ergodic solution.

\begin{remark}
	It is noteworthy that \cite{liu2012} has established seminal results on the stationarity and ergodicity of bivariate INGARCH models. These results have been used for instance by \cite{leeetal2018}, \cite{cuizhu2018}, and \cite{cuietal2019}.
\end{remark}

\subsection{Existing bivariate INGARCH models and their limitations}\label{other_ingarch}

We here discuss the existing bivariate INGARCH models and present some problems regarding the contemporaneous correlation, which relies on the ability for capturing the dependence of the baseline bivariate discrete distribution. The model in \cite{liu2012} and \cite{leeetal2018} is defined by $\bm Y_t|\mathcal F_{t-1}\sim\mbox{BP}^\star(\lambda_{1t},\lambda_{2t},\phi)$ and the dynamics for $\bm\lambda_t$ as in (\ref{BCP_equation}), where $\mbox{BP}^\star$ stands for the bivariate Poisson distribution obtained via the trivariate reduction method, assuming the form
\begin{eqnarray}\label{bivpoisI}
P(Y_{1t}=y_{1},Y_{2t}=y_{2}|\mathcal F_{t-1})&=&e^{-(\lambda_{1t}+\lambda_{2t}-\phi)}\dfrac{(\lambda_{1t}-\phi)^{y_1}(\lambda_{2t}-\phi)^{y_2}}{y_1!y_2!}\times\\&&\sum_{s=0}^{\min(y_1,y_2)}\binom{y_1}{s}\binom{y_2}{s}s!\left(\dfrac{\phi}{(\lambda_{1t}-\phi)(\lambda_{2t}-\phi)}\right)^s,\quad y_1,y_2\in\{0,1,\ldots\},\nonumber
\end{eqnarray}	
with $\phi=\mbox{cov}(Y_{1t},Y_{2t})\in[0,\min(\lambda_{1t},\lambda_{2t}))$ deterministic and does not depending on $t$. To ensure this last condition, the authors assumed that $\phi<\min(a_1,a_2)$, with $(a_1,a_2)^\top=(\bm I-\bm A)^{-1}\bm\omega$ since $\bm\lambda_t\geq(\bm I-\bm A)^{-1}\bm\omega$ $\forall t\geq1$ when $\rho(\bm A)<1$ \citep{liu2012}.
The bivariate distribution in (\ref{bivpoisI}) has Poisson marginals conditional on $\mathcal F_{t-1}$. One of the limitations of this model is that it does not allow for negative contemporaneous correlation and the parameter value that corresponds to independence lies on the boundary of the parameter space. As argued by \cite{berplu2004}, this bivariate Poisson distribution also cannot accommodate higher values of positive correlation, especially for large values of the marginal means (see Eq. (12) from that paper). All these restrictions on the parameter $\phi$ imply compromised practical applicability, where a broad range of correlation is required.

We now discuss the model in \cite{cuizhu2018} given by $\bm Y_t|\mathcal F_{t-1}\sim\mbox{BP}^\dag(\lambda_{1t},\lambda_{2t},\phi)$, with $\bm\lambda_t$ satisfying the dynamics as in (\ref{BCP_equation}). Here $\mbox{BP}^\dag$ denotes the bivariate Poisson distribution with probability function
\begin{eqnarray}\label{bivpoisII}
P(Y_{1t}=y_{1},Y_{2t}=y_{2}|\mathcal F_{t-1})=e^{-(\lambda_{1t}+\lambda_{2t})}\dfrac{\lambda_{1t}^{y_1}\lambda_{2t}^{y_2}}{y_1!y_2!}\left\{1+\phi(e^{-y_1}-e^{-c\lambda_{1t}})(e^{-y_2}-e^{-c\lambda_{2t}})\right\},
\end{eqnarray}	
for $y_1,y_2\in\{0,1,\ldots\}$, where $c=1-e^{-1}$. The parameter space related to $\phi$ is incorrectly stated in that paper. The bivariate Poisson distribution in (\ref{bivpoisII}) belongs to a more general class of distributions proposed by \cite{sar1966} and the correct range of $\phi$ is well-known in the literature. For instance, the correct range can be found in Subsection 2.2 (after Eq. (1)) from \cite{hoflei2012}; see also \cite{lee1996}. Using that result in the particular case given in (\ref{bivpoisII}), we obtain that 
\begin{eqnarray}\label{bivpoisIIrange}
\dfrac{-1}{\max\{e^{-c(\lambda_{1t}+\lambda_{2t})},(1-e^{-c\lambda_{1t}})(1-e^{-c\lambda_{2t}})\}}<\phi<
\dfrac{1}{\max\{e^{-c\lambda_{1t}}(1-e^{-c\lambda_{2t}}),e^{-c\lambda_{2t}}(1-e^{-c\lambda_{1t}})\}},
\end{eqnarray}	
in contrast with the wrong range considered by \cite{cuizhu2018} $|\phi|<\dfrac{1}{(1-e^{-c\lambda_{1t}})(1-e^{-c\lambda_{2t}})}$. The obvious implication of this incorrect bound is that the model there is not well-defined. Moreover, note that (\ref{bivpoisIIrange}) needs to be deterministic and independent of $t$ as done by \cite{leeetal2018} and therefore an additional restriction is necessary. Such restriction is not clearly discussed by \cite{cuizhu2018}. It is worth noting that, although both negative and positive contemporaneous correlation are allowed in that model, its range is extremely limited due to the simplex structure in (\ref{bivpoisII}), as discussed by \cite{cuietal2019}. 

In the papers above, one can observe two major problems regarding the range of cross-correlation: (1) natural restriction of the parameter space due to the baseline bivariate discrete distribution; (2) bounds of the parameter space depending on the conditional means, which are driven by stochastic dynamics. Our proposed model deals directly with both issues and naturally overcomes the limitations because the parameter space of $\phi$ is $\mathbb R$-valued (so it does not depend on the means) and the baseline distribution allows for a broad range of correlation. 

Further related advancements on multivariate INGARCH models are based on copulas. \cite{cuietal2019} proposed a bivariate Poisson INGARCH model having a copula structure as an alternative to the simplex form in (\ref{bivpoisII}). Although this provides greater flexibility for modeling dependence, it adds a cumbersome normalizing constant in the joint probability function which involves a double infinite summation. Further, it is difficult to assess the possible range of correlation. Another related work is due to \cite{foketal2020}, where a multivariate INGARCH model is elegantly introduced based on a latent copula approach. The paper studied the stochastic properties of the multivariate count process and estimate parameters via a quasi-likelihood approach, which is equivalent to modeling the multivariate counts under the assumption of contemporaneous independence (see Section 4 from that paper). A trick point that arises in this approach is the estimation of the parameter responsible for controlling the dependence. Further, like the model proposed by \cite{cuietal2019}, it is to hard assess the possible range of contemporaneous correlation.

As demonstrated in the following section, our likelihood function assumes a very simple form, and the parameters are jointly estimated (including the cross-correlation parameter) via the conditional maximum likelihood method.

\section{Statistical inference and asymptotic results}\label{mle}

To estimate the parameters in the BCP-INGARCH(1,1) model, we consider the conditional maximum likelihood (CML) approach. Here, we also determine conditions that ensure asymptotic normality of the CML estimators. Following \cite{hei2003} and \cite{leeetal2018}, we consider $\bf A$ (see Equation (\ref{BCP_equation})) to be a diagonal matrix with elements $\alpha_1$ and $\alpha_2$. As argued by the authors, besides reducing the number of parameters to be estimated, this makes the study of the asymptotic properties of the maximum likelihood estimators feasible. Denote by $\boldsymbol\theta=(\mbox{vec}({\bf A}),\mbox{vec}({\bf B}), \boldsymbol\omega^\top, \phi)^\top$ the parameter vector, where $\mbox{vec}({\bf A})\equiv (\alpha_{1},\alpha_{2})^\top$ and $\mbox{vec}({\bf B})\equiv (\beta_{11},\beta_{12},\beta_{21},\beta_{22})^\top$. 

Due to the Markovian property of the process, the conditional joint probability function of $\{{\bf Y}_t\}_{t=2}^n$ is given by ${\bf Y}_1={\bf y}_1$ is given by $P({\bf Y}_n={\bf y}_n,{\bf Y}_{n-1}={\bf y}_{n-1},\ldots,{\bf Y}_2={\bf y}_2|{\bf Y}_1={\bf y}_1)=\prod_{t=2}^n P({\bf Y}_t={\bf y}_t|{\bf Y}_{t-1}={\bf y}_{t-1})$, where $n$ is the sample size. The conditional log-likelihood function, denoted by $\ell(\boldsymbol\theta)$, is
\begin{eqnarray*}
	\ell(\boldsymbol\theta)  \propto  \displaystyle\sum_{t=2}^n\bigg\{y_{1t}\log\lambda_{1t}+y_{2t}\log\lambda_{2t}-\lambda_{1t}\left(1+y_{2t}(e^\phi-1)\right)- \lambda_{2t}\exp\left\{-\lambda_{1t}(e^\phi-1)+\phi y_{1t}\right\}+\phi y_{1t}y_{2t}\bigg\},	
\end{eqnarray*}	
with ${\bf y}_t=(y_{1t},y_{2t})^\top$ denoting the observed value of ${\bf Y}_t=(Y_{1t},Y_{2t})^\top$, for $t=1,\ldots,n$. The conditional maximum likelihood estimator (CMLE) of $\boldsymbol\theta$ is given by  $\widehat{\boldsymbol\theta}=\mbox{argmax}_{{\boldsymbol\theta}\in{\boldsymbol\Theta}}\ell({\boldsymbol\theta})$, with ${\boldsymbol\Theta}$ denoting the parameter space. To derive  $\widehat{\boldsymbol{\theta}}$, we employ the numeric optimization routine provided by the Stan software \citep{stan} through the \verb|R| \citep{R} package \verb|rstan|. Stan is a platform for high performance statistical computation where fast results are achieved through compilation in C++. Additionally to the efficiency gain, we have found in this and other works that Stan's numeric maximization can yield superior results in comparison to the standard \verb|optim| command in \verb|R|.
 
The associated score function to the log-likelihood $\ell(\boldsymbol\theta)$ is denoted by $U(\boldsymbol\theta)=\partial\ell(\boldsymbol\theta)/\partial\boldsymbol\theta$, where its components are given by
\begin{eqnarray*}\label{score}
	\dfrac{\partial\ell(\boldsymbol\theta)}{\partial\omega_j}=\sum_{t=2}^n S_{jt}(\boldsymbol\theta)\dfrac{\partial\lambda_{jt}}{\partial \omega_{j}},\quad
	\dfrac{\partial\ell(\boldsymbol\theta)}{\partial\alpha_{j}}=\sum_{t=2}^n S_{jt}(\boldsymbol\theta)\dfrac{\partial\lambda_{jt}}{\partial \alpha_{j}},\quad
	\dfrac{\partial\ell(\boldsymbol\theta)}{\partial\beta_{kj}}=\sum_{t=2}^n S_{jt}(\boldsymbol\theta)\dfrac{\partial\lambda_{jt}}{\partial \beta_{kj}},\quad
	\dfrac{\partial\ell(\boldsymbol\theta)}{\partial\phi}=\sum_{t=2}^n S_{3t}(\boldsymbol\theta),
\end{eqnarray*}
with
\begin{eqnarray*}\label{Ss}
	S_{1t}(\boldsymbol\theta)&=&\dfrac{y_{1t}}{\lambda_{1t}}-1+(e^\phi-1)\big(\lambda_{2t}\exp\left\{-\lambda_{1t}(e^\phi-1)+\phi y_{1t}\right\}-y_{2t}\big),\\
	S_{2t}(\boldsymbol\theta)&=&\dfrac{y_{2t}}{\lambda_{2t}}-\exp\left\{-\lambda_{1t}(e^\phi-1)+\phi y_{1t}\right\},\\
	S_{3t}(\boldsymbol\theta)&=&-y_{2t}\lambda_{1t}e^\phi-\lambda_{2t}\exp\left\{-\lambda_{1t}(e^\phi-1)+\phi y_{1t}\right\}(y_{1t}-\lambda_{1t}e^\phi)+y_{1t}y_{2t},
\end{eqnarray*}
and
\begin{eqnarray*}
	\dfrac{\partial\lambda_{jt}}{\partial \omega_{j}}=1+\alpha_j\dfrac{\partial\lambda_{j\,t-1}}{\partial\omega_j},\quad 
	\dfrac{\partial\lambda_{jt}}{\partial \alpha_{j}}=\lambda_{j\,t-1}+\alpha_j\dfrac{\partial\lambda_{j\,t-1}}{\partial\alpha_j},\quad 
	\dfrac{\partial\lambda_{jt}}{\partial \beta_{jk}}=y_{k\,t-1}+\alpha_j\dfrac{\partial\lambda_{j\,t-1}}{\partial\beta_{jk}},
\end{eqnarray*}	
for $k,j=1,2$ and $t=2,\ldots,n$.

The next result gives us some properties of the score function useful to establish the asymptotic normality of the CML estimators.

\begin{theorem}
	We have that $\{{\bm U}(\boldsymbol\theta);\mathcal F_{t-1}\}$ is a martingale difference sequence. Further, ${\bm U}(\boldsymbol\theta)$ satisfies the information matrix equality
	\begin{eqnarray}\label{bartlett_identity}
		-E(\nabla {\bm U}(\boldsymbol\theta))=E({\bm U}(\boldsymbol\theta) {\bm U}(\boldsymbol\theta)^\top).
	\end{eqnarray}	
\end{theorem}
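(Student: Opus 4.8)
The plan is to prove the two claims in sequence, writing the score as a sum $\bm U(\boldsymbol\theta)=\sum_{t=2}^n \bm U_t(\boldsymbol\theta)$ of per-observation contributions, where $\bm U_t$ collects the terms indexed by $t$ in the displayed score (assembled from $S_{1t},S_{2t},S_{3t}$ and the recursively-defined derivatives $\partial\lambda_{jt}/\partial\theta$). The martingale-difference statement I read as the assertion that $\{\bm U_t(\boldsymbol\theta)\}$ is a martingale difference sequence with respect to the filtration $\{\mathcal F_{t-1}\}$, i.e. $E(\bm U_t(\boldsymbol\theta)\mid\mathcal F_{t-1})=0$. The structural fact I would use throughout is that $\lambda_{1t},\lambda_{2t}$ and all their parameter derivatives are $\mathcal F_{t-1}$-measurable, so that conditioning on $\mathcal F_{t-1}$ treats these as constants and leaves only the randomness carried by $\bm Y_t$ to be integrated against the BCP conditional law.

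For the first claim I would verify $E(S_{jt}\mid\mathcal F_{t-1})=0$ for $j=1,2,3$. Each reduces to a short list of conditional moments of the BCP law: $E(Y_{1t}\mid\mathcal F_{t-1})=\lambda_{1t}$, $E(Y_{2t}\mid\mathcal F_{t-1})=\lambda_{2t}$, the Poisson moment-generating identities $E(e^{\phi Y_{1t}}\mid\mathcal F_{t-1})=e^{\lambda_{1t}(e^\phi-1)}$ and $E(Y_{1t}e^{\phi Y_{1t}}\mid\mathcal F_{t-1})=\lambda_{1t}e^\phi e^{\lambda_{1t}(e^\phi-1)}$, and the cross-moment $E(Y_{1t}Y_{2t}\mid\mathcal F_{t-1})=\lambda_{1t}\lambda_{2t}e^\phi$, the last obtained via the tower property from $Y_{2t}\mid Y_{1t},\mathcal F_{t-1}\sim\mbox{Poisson}(\mu_{2t}e^{\phi Y_{1t}})$ with $\mu_{2t}=\lambda_{2t}e^{-\lambda_{1t}(e^\phi-1)}$. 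Substituting these collapses each conditional mean to zero; for example $E(S_{2t}\mid\mathcal F_{t-1})=1-1=0$, and in $E(S_{3t}\mid\mathcal F_{t-1})$ the two surviving contributions cancel as $-\lambda_{1t}\lambda_{2t}e^\phi+\lambda_{1t}\lambda_{2t}e^\phi$. These computations are elementary but must be carried out in the mean-parameterization of Section \ref{def_prop} for the cancellations to be exact.

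For the information-matrix equality I would establish the identity conditionally and per $t$, then aggregate. Conditionally on $\mathcal F_{t-1}$ the maps $\boldsymbol\theta\mapsto\lambda_{1t}(\boldsymbol\theta),\lambda_{2t}(\boldsymbol\theta)$ are deterministic, so $f_t(\cdot;\boldsymbol\theta)\equiv P(\bm Y_t=\cdot\mid\mathcal F_{t-1})$ is for each $\boldsymbol\theta$ a genuine probability mass function on $\{0,1,\ldots\}^2$. Differentiating the normalization $\sum_{\bm y_t}f_t(\bm y_t;\boldsymbol\theta)=1$ once under the summation sign reproduces $E(\bm U_t\mid\mathcal F_{t-1})=0$; differentiating a second time and using $\nabla^2 f_t=f_t(\nabla\log f_t)(\nabla\log f_t)^\top+f_t\,\nabla^2\log f_t$ yields the conditional identity $-E(\nabla\bm U_t\mid\mathcal F_{t-1})=E(\bm U_t\bm U_t^\top\mid\mathcal F_{t-1})$. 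Taking unconditional expectations gives $-E(\nabla\bm U_t)=E(\bm U_t\bm U_t^\top)$ for each $t$. Summing over $t$ treats the left side directly, while on the right the cross-products $E(\bm U_s\bm U_t^\top)$ with $s\neq t$ vanish by the martingale-difference property (for $s<t$, condition on $\mathcal F_{t-1}$ and use that $\bm U_s$ is $\mathcal F_{t-1}$-measurable whereas $E(\bm U_t\mid\mathcal F_{t-1})=0$), so $E(\bm U\bm U^\top)=\sum_t E(\bm U_t\bm U_t^\top)$ and (\ref{bartlett_identity}) follows.

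The main obstacle I anticipate is justifying the interchange of the two $\boldsymbol\theta$-differentiations with the infinite double summation over $\bm y_t$ that underlies the conditional Bartlett identity. Because the BCP mass function (\ref{jointpf}) carries both a bilinear factor $e^{\phi xy}$ and an exponential-of-$\lambda$ term, I would need a dominating function that is integrable uniformly in a neighborhood of an interior point $\boldsymbol\theta$; I expect to obtain this from the light tails of the Poisson and mixed-Poisson components together with the stationarity and moment conditions already secured via Theorem \ref{e-chain}. I would therefore state the identity for $\boldsymbol\theta$ in the interior of $\boldsymbol\Theta$, so that the non-negativity constraints on $\boldsymbol\omega,\bm A,\bm B$ do not obstruct the differentiation, and the two differentiations pass legitimately through the sum.
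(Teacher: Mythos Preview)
Your proposal is correct. For the martingale-difference part you proceed exactly as the paper does: list the conditional moments $E(Y_{jt}\mid\mathcal F_{t-1})$, $E(e^{\phi Y_{1t}}\mid\mathcal F_{t-1})$, $E(Y_{1t}e^{\phi Y_{1t}}\mid\mathcal F_{t-1})$, $E(Y_{1t}Y_{2t}\mid\mathcal F_{t-1})$, substitute them into $S_{1t},S_{2t},S_{3t}$, and use the $\mathcal F_{t-1}$-measurability of $\partial\lambda_{jt}/\partial\boldsymbol\theta$.

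For the information-matrix equality you take a genuinely different route. The paper argues by \emph{direct verification on a representative entry}: it fixes the parameter $\alpha_1$, expands $\partial^2\ell/\partial\alpha_1^2$ via the product rule, computes $E(\partial S_{1t}/\partial\lambda_{1t}\mid\mathcal F_{t-1})=\lambda_{1t}^{-1}+(e^\phi-1)^2\lambda_{2t}$ to obtain one side, then computes $E(S_{1t}^2\mid\mathcal F_{t-1})$ by introducing the intermediate $\sigma$-field $\mathcal G_t=\sigma(Y_{1t},\bm Y_{t-1},\ldots,\bm\lambda_1)$ and exploiting $Y_{2t}\mid\mathcal G_t\sim\mbox{Poisson}(\mu_{2t}e^{\phi Y_{1t}})$ to get the other side; the cross-terms are killed exactly as you indicate, and the remaining entries are declared analogous. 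Your argument instead invokes the abstract conditional Bartlett identity obtained by differentiating $\sum_{\bm y}f_t(\bm y;\boldsymbol\theta)=1$ twice. Your approach is cleaner and handles all entries of the matrix at once, at the price of the dominated-convergence justification you flag; the paper's approach sidesteps that abstract regularity issue by computing everything explicitly, and as a by-product delivers closed-form expressions for the Fisher-information entries, but is case-by-case. Either line of argument is sound here.
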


\begin{proof}
		
	We have that $E(Y_{jt}|\mathbb F_{t-1})=\lambda_{jt}$, for $j=1,2$, $E(\exp\{\phi Y_{1t}\}|\mathcal F_{t-1})=\exp\{\lambda_{1t}(e^\phi-1)\}$, $E(Y_{1t}\exp\{\phi Y_{1t}\}|\mathcal F_{t-1})=\lambda_{1t}\exp\{\phi+\lambda_{1t}(e^\phi-1)\}$, and $E(Y_{1t}Y_{2t}|\mathcal F_{t-1})=\lambda_{1t}\lambda_{2t}e^\phi$.

	Using the above expectations, we obtain that $E(S_{jt}(\bm\theta)|\mathcal F_{t-1})=0$ for $j=1,2$ and $t=2,\ldots,n$. Moreover, $\dfrac{\partial\lambda_{jt}}{\partial\bm\theta}$ is $\mathcal F_{t-1}$-mensurable. These results implies that $E(\bm U(\bm\theta)|\mathcal F_{t-1})=0$ almost surely. Therefore,  $\{\bm U(\bm\theta);\mathcal F_{t-1}\}$ is a martingale difference sequence.

	Let us now to show that the identity (\ref{bartlett_identity}) holds. Consider the case where the derivatives are taken with respect to $\alpha_1$. The remaining cases follow in a similar fashion and therefore are omitted. We have that
	\begin{eqnarray*}
		\dfrac{\partial^2\ell(\bm\theta)}{\partial\alpha_1^2}=\sum_{t=2}^n\dfrac{\partial S_{1t}(\bm\theta)}{\partial\alpha_1}\left(\dfrac{\partial\lambda_{1t}}{\partial\alpha_1}\right)^2+\sum_{t=2}^n S_{1t}(\bm\theta)\dfrac{\partial^2\lambda_{1t}}{\partial\alpha_1^2}.
	\end{eqnarray*}	
	By using that $E(\partial S_{1t}(\bm\theta)/\partial\alpha_1|\mathcal F_{t-1})=\dfrac{1}{\lambda_{1t}}+(e^\phi-1)^2\lambda_{2t}$ and $E(S_{1t}(\bm\theta)|\mathcal F_{t-1})=0$, we obtain that
	\begin{eqnarray}\label{eq_I}
		E\left(-\frac{\partial^2\ell(\bm\theta)}{\partial\alpha_1^2}\right)=\sum_{t=2}^nE\left[\left(\dfrac{1}{\lambda_{1t}}+(e^\phi-1)^2\lambda_{2t}\right)\left(\dfrac{\partial\lambda_{1t}}{\partial\alpha_1}\right)^2\right].
	\end{eqnarray}	
	
	On the other hand, it follows that
	\begin{eqnarray}\label{eq_II}
		E\left(\left(\frac{\partial\ell(\bm\theta)}{\partial\alpha_1}\right)^2\right)=\sum_{t=2}^nE\left[S^2_{1t}(\bm\theta)\left(\dfrac{\partial\lambda_{1t}}{\partial\alpha_1}\right)^2\right]+\sum_{t\neq t'}^nE\left[S_{1t}(\bm\theta)S_{1t'}(\bm\theta)\dfrac{\partial\lambda_{1t}}{\partial\alpha_1}\dfrac{\partial\lambda_{1t'}}{\partial\alpha_1}\right].
	\end{eqnarray}
	
	 Assume that $t>t'$ without loss of generality. Then, $$E\left[S_{1t}(\bm\theta)S_{1t'}(\bm\theta)\dfrac{\partial\lambda_{1t}}{\partial\alpha_1}\dfrac{\partial\lambda_{1t'}}{\partial\alpha_1}\right]=E\left[S_{1t'}(\bm\theta)\dfrac{\partial\lambda_{1t}}{\partial\alpha_1}\dfrac{\partial\lambda_{1t'}}{\partial\alpha_1}E\left(S_{1t}(\bm\theta)|\mathcal F_{t-1}\right)\right]=0$$
	 since $S_{1t'}(\bm\theta)\partial\lambda_{1t}/\partial\alpha_1\partial\lambda_{1t'}/\partial\alpha_1$ is $\mathcal F_{t-1}$-mensurable and $E\left(S_{1t}(\bm\theta)|\mathcal F_{t-1}\right)=0$. 
	 
	 Define $\mathcal G_t\equiv \sigma(Y_{1,t},\bm Y_{t-1},\ldots,\bm Y_1,\bm \lambda_1)$. Thus, $Y_{2t}|\mathcal G_t\sim\mbox{Poisson}(\lambda_{2t}\exp\{-\lambda_{1t}(e^\phi-1)+\phi y_{1t}\})$. Hence, it follows that
	 \begin{eqnarray*}
	 	E(S^2_{1t}(\bm\theta)|\mathcal F_{t-1})&=&\mbox{Var}\left(\frac{Y_{1t}}{\lambda_{1t}}\Big|\mathcal F_{t-1}\right)+(e^\phi-1)^2E\left(\mbox{Var}\left(Y_{2t}\big|\mathcal G_t\right)\big|\mathcal F_{t-1}\right)\\
	 	&=&\dfrac{1}{\lambda_{1t}}+(e^\phi-1)^2E\left(\lambda_{2t}\exp\{-\lambda_{1t}(e^\phi-1)+\phi y_{1t}\}\big|\mathcal F_{t-1}\right)\\
	 	&=&\dfrac{1}{\lambda_{1t}}+(e^\phi-1)^2\lambda_{2t}.
	 \end{eqnarray*}	
 
\noindent Applying the above expectations in (\ref{eq_II}), we obtain that it equals (\ref{eq_I}), so proving the desired result.	
\end{proof}

The next point is to develop the asymptotic normality of the conditional maximum likelihood estimator, where some regularity conditions are necessary as follows.

\begin{assumption}\label{RCp}
	There exists $p\in[1,\infty]$ such as $\|\bm A\|_p+2^{1-1/p}\|\bm B\|_p<1$.
\end{assumption}

\begin{assumption}\label{compact}
	The true parameter value $\bm\theta_0$ is an interior point of $\bm\Theta$, with $\bm\Theta$ being a compact set.
\end{assumption}

\begin{remark}
	In the simulated and real data analyses, we consider that Assumption \ref{RCp} is in force with $p=1$.
\end{remark}

Expanding $\bm U(\widehat{\bm\theta})$ in Taylor's series around $\bm\theta_0$, we obtain that ${\bf0}=\bm U(\widehat{\bm\theta})=\bm U(\bm\theta_0)+(\widehat{\bm\theta}-\bm\theta_0)\nabla \bm U(\widetilde{\bm\theta})$, where $\widetilde{\bm\theta}$ belongs to the segment connecting the points $\widehat{\bm\theta}$ and ${\bm\theta}_0$. We rearrange the terms to obtain that
\begin{eqnarray}\label{expansion}
	\sqrt{n}(\widehat{\bm\theta}-\bm\theta_0)\left(-\dfrac{\nabla\bm U(\widetilde{\bm\theta})}{n}\right)=\dfrac{\bm U(\bm\theta_0)}{\sqrt{n}}.
\end{eqnarray}	

Under Assumption \ref{RCp}, the stationarity and ergodicity of $\{(\bm Y_t,\bm\lambda_t)\}_{t\geq1}$ implies in the same properties for $\bm U(\bm\theta_0)$. Hence, we use the Central Limit Theorem for Martingales \citep{halhey1980} and get that 
$\dfrac{\bm U(\bm\theta_0)}{\sqrt{n}}\stackrel{d}{\longrightarrow}N(\bm 0,\bm I(\bm\theta_0))$ as $n\rightarrow\infty$, where 
$\bm I(\bm\theta_0)$ is the Fisher information matrix which can be obtained as the limit in probability $\bm I(\bm\theta_0)=\mbox{plim}_{n\rightarrow\infty}\dfrac{1}{n}\displaystyle\sum_{t=2}^n E\left(-\nabla\bm U(\bm\theta_0) |\mathcal F_{t-1}\right)$.

Now, under Assumptions \ref{RCp} and \ref{compact}, we apply the Law of Large Number for Martingales and follow the steps of the proof of Proposition 5 by \cite{leeetal2016} (see also Lemma 6 by \cite{leeetal2018}), to obtain that $-\dfrac{\nabla\bm U(\widetilde{\bm\theta})}{n}\stackrel{a.s.}{\longrightarrow}\bm I(\bm\theta_0)$ as $n\rightarrow\infty$. By combining the above results in (\ref{expansion}), we have that $\sqrt{n}(\widehat{\bm\theta}-\bm\theta_0)\stackrel{d}{\longrightarrow} N(\bm0,\bm I^{-1}(\bm\theta_0))$.

With this asymptotic normality in hands, we can assess standard errors, construct confidence intervals for the parameters, and test the hypothesis of interest. In the next section, we provide some simulated results aiming at (i) study of the finite-sample performance of the CMLEs (ii) the evaluation of some strategies to get the standard errors of the estimates including a bootstrap approach; (iii) hypothesis testing for $H_0: \phi=0$ against $H_1: \phi\neq0$ (testing the presence of contemporaneous correlation). For this last aim, we compare the performance of the likelihood ratio and score tests. Due to the asymptotic normality of the conditional maximum likelihood estimators and the fact that $\phi=0$ does not belong to the boundary space, these statistics are asymptotically  $\chi_1^2$-distributed under the null hypothesis.

\section{Simulation studies}\label{simulation}

\subsection{Point estimation}

In this section, we assess the finite sample behavior of the conditional maximum likelihood estimators for the BCP-INGARCH(1,1) model through Monte Carlo simulations. In this study, 1000 replicas are used and the sample sizes $n = 200, 500$ are investigated. The study is conducted considering the full version of the proposed model, where the matrix {\bf B}  (see Equation (\ref{BCP_equation})) is non-diagonal. The vector of parameter in this case is $\boldsymbol{\theta} = (\alpha_{1}, \alpha_{2}, \beta_{11}, \beta_{12},\beta_{21}, \beta_{22},\omega_1, \omega_2, \phi)^\top$ and the true values used to generate the data are the following: Configuration \textbf{(a)}  $\boldsymbol{\theta}=(0.3, 0.2, 0.3, 0.1, 0.2, 0.2, 1.0, 1.0, 0.1)^\top$ and Configuration  \textbf{(b)}  $\boldsymbol{\theta}=(0.3, 0.2, 0.3, 0.1, 0.2, 0.2, 1.0, 1.0, -0.1)^\top$. 

\begin{table}
	\centering
	\begin{tabular}{@{}llcccccccccc@{}}
		\toprule
		& \multicolumn{1}{c}{$n$}              & \textbf{}                & $\alpha_{1}$  & $\alpha_{2}$ & $\beta_{11}$ & $\beta_{12}$ & $\beta_{21}$ & $\beta_{22}$ & $\omega_{1}$ & $\omega_{2}$ & $\phi $   \\ \midrule
		\multirow{6}{*}{\textbf{(a)}}          & \multirow{3}{*}{200}  & Mean     &    0.278 & 0.195 & 0.287 & 0.100 & 0.201 & 0.187 & 1.105 & 1.042 & 0.100 \\
		
		&                                   
		& SD  & 0.161 & 0.158 & 0.077 & 0.069 & 0.071 & 0.077 & 0.410 & 0.361 & 0.023  \\
		&                                   
		& MSE     &  0.026 & 0.025 & 0.006 & 0.005 & 0.005 & 0.006 & 0.179 & 0.132 & 0.001  \\ \cmidrule(l){3-12}

		& \multirow{3}{*}{500} & 
		Mean   &  0.287 & 0.197 & 0.297 & 0.102 & 0.202 & 0.194 & 1.044 & 1.020 & 0.100   \\
		&                                   & 
		SD      & 0.105 & 0.116 & 0.048 & 0.045 & 0.048 & 0.048 & 0.263 & 0.261 & 0.014  \\
		&                                   & 
		MSE     &  0.011 & 0.013 & 0.002 & 0.002 & 0.002 & 0.002 & 0.071 & 0.068 & 0.000    \\
		
		\midrule 
		
		\multirow{6}{*}{\textbf{(b)}} & \multirow{3}{*}{200}  & 
		
		Mean   & 0.266 & 0.215 & 0.292 & 0.104 & 0.198 & 0.181 & 1.123 & 1.012 & $-$0.100    \\
		&                                   & 
		
		SD  & 0.172 & 0.184 & 0.073 & 0.067 & 0.066 & 0.073 & 0.449 & 0.440 & 0.025  \\
		&                                   & 
		
		MSE   &  0.031 & 0.034 & 0.005 & 0.004 & 0.004 & 0.006 & 0.216 & 0.193 & 0.001   \\  \cmidrule(l){3-12}

		& \multirow{3}{*}{500} & 
		Mean   & 0.282 & 0.194 & 0.297 & 0.100 & 0.202 & 0.199 & 1.062 & 1.015 & $-$0.100    \\
		&                                   & 
		SD      & 0.119 & 0.138 & 0.047 & 0.045 & 0.044 & 0.045 & 0.295 & 0.309 & 0.016 \\
		&                                   & 
		MSE     &0.015 & 0.019 & 0.002 & 0.002 & 0.002 & 0.002 & 0.091 & 0.095 & 0.000  \\
		\midrule \vspace{0.3cm}
	\end{tabular} 
	\caption{Empirical mean, standard errors (SD), and mean squared error (MSE) of the Monte Carlo estimates for the BCP-INGARCH model under Configurations \textbf{(a)} and \textbf{(b)}.}\label{scenario2}
\end{table}

We simulate data from the BCP-INGARCH process with a burn-in period of 300 iterations to reduce the influence of the initial values in the simulated series. Results are reported in Table \ref{scenario2}, which displays the empirical mean and standard errors (SD) of the parameters as well as the mean squared error (MSE). This shows that the conditional maximum likelihood estimation behaves well, displaying average estimates that are close to the true values used to generate the data. In particular, the correlation parameter $\phi$ is very well estimated under both sample sizes and configurations. 
As expected, the standard deviations and MSEs decrease as the sample size increases.

We conclude that the conditional maximum likelihood estimation works well for all parameters of the BCP-INGARCH(1,1) process, producing an excellent average estimation even for the small sample size of $n=200$. Increasing the sample size produces estimates with low variability, as reflected by the decrease in standard deviation and mean squared errors.

\subsection{Standard errors}\label{ses}

Alternatives proposed in the literature to obtain the standard errors of INGARCH model parameters are investigated in this section. We consider the following consistent estimators for the Fisher information matrix \citep{ferland2006}:
\begin{eqnarray*}
	\widehat{\bm S}_n = \frac{1}{n}\sum_{t=2}^{n} \bm U_t(\widehat{\boldsymbol{\theta}}) \bm U_t^T(\widehat{\boldsymbol{\theta}})\quad\mbox{and}\quad
	\widehat{\bm D}_n = -\frac{1}{n} \sum_{t=2}^{n} \bm H_t(\widehat{\boldsymbol{\theta}}),
\end{eqnarray*}
where $\bm U_t(\cdot)$ and $\bm H_t(\cdot)$ denote the score function and Hessian matrix associated to the $t$-th bivariate count observation, respectively, for $t=2,\ldots,n$, and $\widehat{\boldsymbol{\theta}}$ is the CML estimator of $\boldsymbol{\theta}$. We employ both these methods to obtain the standard errors of the BCP-INGARCH model parameter estimates. Additionally, a parametric bootstrap alternative is evaluated, which is carried as follows. For a particular data set, model parameters are estimated via conditional maximum likelihood and used to simulate 500 trajectories of the process. The model is fitted to each replica and the bootstrap standard errors are given by the empirical standard deviations of the 500 bootstrap estimates.

A simulation study is carried to evaluate the performance of the three approaches in contrast. For simplicity, parameter settings are set with diagonal {\bf B} matrix and then we consider the true parameter vector $\boldsymbol{\theta} = (\alpha_{1}, \alpha_{2}, \beta_{11}, \beta_{22}, \omega_1, \omega_2, \phi)^\top= (0.4, 0.3, 0.2, 0.4, 1, 0.5, 0.7)^\top$. The study is repeated for the sample sizes of 100 and 500 to evaluate the methods under a small and moderate sample size. Table \ref{se_study_1} contains the results due to $n=100$, where for each model parameter we report the empirical mean, standard deviation, and median of the standard errors based on $\widehat{\bm S}_n$, $\widehat{\bm D}_n$, and bootstrap. Ideally, we would like these to approach the Monte Carlo standard deviation, that is given in bold.

\begin{sidewaystable}
	\begin{tabular}{llllllllll}
		\midrule
		& \textbf{Method}   & \textbf{Mean} & \textbf{SD} & \textbf{Median} &                                                                                         & \textbf{Method}   & \textbf{Mean} & \textbf{SD} & \textbf{Median} \\ \midrule
		\multirow{3}{*}{\begin{tabular}[c]{@{}l@{}}$\alpha_{1}$\\ \textbf{MC: 0.208}\end{tabular}} & $\widehat{\bm S}_n$     & 0.307         & 0.161        & 0.272           & \multirow{3}{*}{\begin{tabular}[c]{@{}l@{}}$\omega_{1}$\\ \textbf{MC: 0.479}\end{tabular}} & $\widehat{\bm S}_n$     & 0.693         &    0.396    & 0.602           \\
		& $\widehat{\bm D}_n$     & 0.254         & 0.187       & 0.207           &                                                                                         & $\widehat{\bm D}_n$     & 0.576         & 0.410       & 0.471           \\
		& Boot. & 0.205         & 0.044       & 0.200           &                                                                                         & Boot.  & 0.492         & 0.109       & 0.478           \\ \midrule
		\multirow{3}{*}{\begin{tabular}[c]{@{}l@{}}$\alpha_{2}$\\ \textbf{MC: 0.184}\end{tabular}} & $\widehat{\bm S}_n$     & 0.265         & 0.155       & 0.224          & \multirow{3}{*}{\begin{tabular}[c]{@{}l@{}}$\omega_{2}$\\ \textbf{MC: 0.229}\end{tabular}} & $\widehat{\bm S}_n$     & 0.309         & 0.162     & 0.273           \\
		& $\widehat{\bm D}_n$    & 0.206         & 0.170       & 0.166           &                                                                                         & $\widehat{\bm D}_n$     & 0.250         & 0.165       & 0.215           \\
		& Boot. & 0.186         & 0.044       & 0.200           &                                                                                         & Boot. & 0.259         & 0.194       & 0.242           \\  \midrule
		\multirow{3}{*}{\begin{tabular}[c]{@{}l@{}}$\beta_{11}$\\ \textbf{MC: 0.061}\end{tabular}}  & $\widehat{\bm S}_n$     & 0.076         & 0.015       & 0.074           & \multirow{3}{*}{\begin{tabular}[c]{@{}l@{}}$\phi$\\ \textbf{MC: 0.056}\end{tabular}}       & $\widehat{\bm S}_n$    & 0.076         & 0.024       & 0.074           \\
		& $\widehat{\bm D}_n$    & 0.065         & 0.028       & 0.063           &                                                                                         & $\widehat{\bm D}_n$     & 0.056         & 0.019       & 0.055           \\
		& Boot. & 0.063         & 0.006       & 0.063           &                                                                                         & Boot.  & 0.061         & 0.011       & 0.060           \\  \midrule
		\multirow{3}{*}{\begin{tabular}[c]{@{}l@{}}$\beta_{22}$\\ \textbf{MC: 0.158}\end{tabular}}  & $\widehat{\bm S}_n$     & 0.205         & 0.015       & 0.193           &                                                                                         &                   &               &             &                 \\
		& $\widehat{\bm D}_n$     & 0.172         & 0.086       & 0.160           &                                                                                         &                   &               &             &                 \\  
		& Boot. & 0.156         & 0.025       & 0.157           &                                                                                         &                   &               &             &              \\ \midrule   
	\end{tabular}
	\caption{Results of Monte Carlo simulation study for the standard errors of the BCP-INGARCH model parameters with $n=100$.}\label{se_study_1}
\end{sidewaystable}

Under a small sample size, it is observed that the best results are due to the bootstrap approach. Its mean estimates are close to the Monte Carlo standard deviation of all model parameters, something that is not true for the other methods under consideration. Moreover, the standard errors of bootstrap estimates are the smallest in all cases except for $\omega_{2}$. This method is especially advantageous concerning the $\alpha$'s parameters where $\widehat{\bm S}_n$ and $\widehat{\bm D}_n$ produce asymmetrically distributed estimates with large standard errors. 

Sample size is increased to $n=500$ and results are given in Table \ref{se_study_2}. As expected, the asymptotic methods show improved performance with the increase in sample size. The three methods now behave well for all model parameters, approximating closely the Monte Carlo standard deviations, with a slight disadvantage in the $\widehat{\bm S}_n$ approach. As before, the standard deviations due to the bootstrap alternative are the smallest among the three methods, but the difference is now subtle.

\begin{sidewaystable}
	\begin{tabular}{llllllllll}
		\midrule
		& \textbf{Method}   & \textbf{Mean} & \textbf{SD} & \textbf{Median} &                                                                                         & \textbf{Method}   & \textbf{Mean} & \textbf{SD} & \textbf{Median} \\ \midrule
		\multirow{3}{*}{\begin{tabular}[c]{@{}l@{}}$\alpha_{1}$ \\ \textbf{MC: 0.096}\end{tabular}} & $\widehat{\bm S}_n$     & 0.101         & 0.021        & 0.099           & \multirow{3}{*}{\begin{tabular}[c]{@{}l@{}}$\omega_{1}$\\  \textbf{MC: 0.209}\end{tabular}} & $\widehat{\bm S}_n$     & 0.220         &    0.050    & 0.214           \\
		& $\widehat{\bm D}_n$     & 0.094         & 0.025       & 0.090           &                                                                                         & $\widehat{\bm D}_n$     & 0.206         & 0.056       & 0.197           \\
		& Boot. & 0.099         & 0.017       & 0.096           &                                                                                         & Boot.  & 0.218         & 0.039       & 0.214           \\ \midrule
		\multirow{3}{*}{\begin{tabular}[c]{@{}l@{}}$\alpha_{2}$\\  \textbf{MC: 0.079}\end{tabular}} & $\widehat{\bm S}_n$     & 0.081         & 0.020       & 0.080          & \multirow{3}{*}{\begin{tabular}[c]{@{}l@{}}$\omega_{2}$\\  \textbf{MC: 0.092}\end{tabular}} & $\widehat{\bm S}_n$     & 0.094         & 0.019     & 0.092           \\
		& $\widehat{\bm D}_n$     & 0.075         & 0.022       & 0.072           &                                                                                         & $\widehat{\bm D}_n$     & 0.088         & 0.020       & 0.085           \\
		& Boot. & 0.079         & 0.013       & 0.077           &                                                                                         & Boot. & 0.092         & 0.014       & 0.091           \\  \midrule
		\multirow{3}{*}{\begin{tabular}[c]{@{}l@{}}$\beta_{11}$\\  \textbf{MC: 0.027}\end{tabular}}  & $\widehat{\bm S}_n$     & 0.029         & 0.003       & 0.029           & \multirow{3}{*}{\begin{tabular}[c]{@{}l@{}}$\phi$\\  \textbf{MC: 0.019}\end{tabular}}       & $\widehat{\bm S}_n$     & 0.023         & 0.005       & 0.023           \\
		& $\widehat{\bm D}_n$     & 0.027         & 0.004       & 0.027           &                                                                                         & $\widehat{\bm D}_n$     & 0.020         & 0.005       & 0.020           \\
		& Bootstrap & 0.027         & 0.002       & 0.027           &                                                                                         & Boot.  & 0.020         & 0.002       & 0.020           \\  \midrule
		\multirow{3}{*}{\begin{tabular}[c]{@{}l@{}}$\beta_{22}$\\  \textbf{MC: 0.069}\end{tabular}}  & $\widehat{\bm S}_n$     & 0.073         & 0.011       & 0.072           &                                                                                         &                   &               &             &                 \\
		& $\widehat{\bm D}_n$     & 0.069         & 0.009       & 0.068           &                                                                                         &                   &               &             &                 \\  
		& Boot. & 0.069         & 0.007       & 0.069           &                                                                                         &                   &               &             &              \\   \midrule 
	\end{tabular}
	\caption{Results of Monte Carlo simulation study for the standard errors of the BCP-INGARCH model parameters with $n=500$.}\label{se_study_2}
\end{sidewaystable}

We conclude with the recommendation of using the bootstrap approach for obtaining the standard errors of the BCP-INGARCH model estimates when the sample size is small. In this situation, the asymptotic based methods tend to overestimate this quantity and also have larger variability. When the sample size is moderate as 500 observations, the $\widehat{\bm D}_n$ strategy behaves well and can be considered to minimize computational cost.

\subsection{Hypothesis testing}\label{hyp_test}

Methods to test for the presence of contemporaneous correlation between the count time series are evaluated in this section. Under a BCP-INGARCH model we would like to test the hypothesis $H_0: \phi = 0 $ versus $H_1: \phi \neq 0$. We evaluate two asymptotic tests, the likelihood ratio test (LRT) and the score test via a simulation study. The Monte Carlo probability of rejecting $H_0$ in favor of $H_1$ is calculated with 1000 replicas for each $\phi$ in the range $[-1, 1]$ with 0.1 spacing. In each Monte Carlo replica, model parameters are estimated under $H_0$ and $H_1$, denoted as $\widetilde{\bm\theta}$ and $\widehat{\bm\theta}$, respectively. The likelihood ratio test statistic is calculated as $-2(\ell(\widetilde{\bm\theta}) - \ell(\widehat{\bm\theta}))$, where $\ell(\cdot)$ is the log-likelihood function. The score test relies only on the model parameters estimated under $H_0$ and its test statistic is given by $\bm U(\widetilde{\theta})^\top \bm I^{-1}(\widetilde{\theta}) \bm U(\widetilde{\theta})$, where $\bm U(\cdot)$ and $\bm I(\cdot)$ denotes the score function and the model's Fisher Information matrix, respectively, with the first being calculated analytically via expressions provided in Subsection \ref{mle} and the former by numerical differentiation.

The simulation study is carried for a setting where {\bf A} and {\bf B} are diagonal matrices with the true parameter values $(\alpha_{1}, \alpha_2, \beta_{11}, \beta_{22}, \omega_{1}, \omega_{2})=(0.4, 0.3, 0.2, 0.4, 1, 1)^\top$. We refer to this specification as Scenario I.  A configuration where the matrix {\bf B} is non-diagonal is chosen in Scenario II, with the parameter vector $(\alpha_{1}, \alpha_2, \beta_{11}, \beta_{12}, \beta_{21}, \beta_{22}, \omega_{1}, \omega_{2})=(0.3, 0.2, 0.3,0.1, 0.2, 0.2, 1, 0.5)$. We set the significance level of both tests at 5\%. Figure \ref{LRT_Score} displays the power of the likelihood ratio and score tests as function of $\phi$ under both Scenarios 1 and 2 and sample sizes equal to 100 and 500.

\begin{figure}
	\centering
	\includegraphics[width=0.7\textwidth]{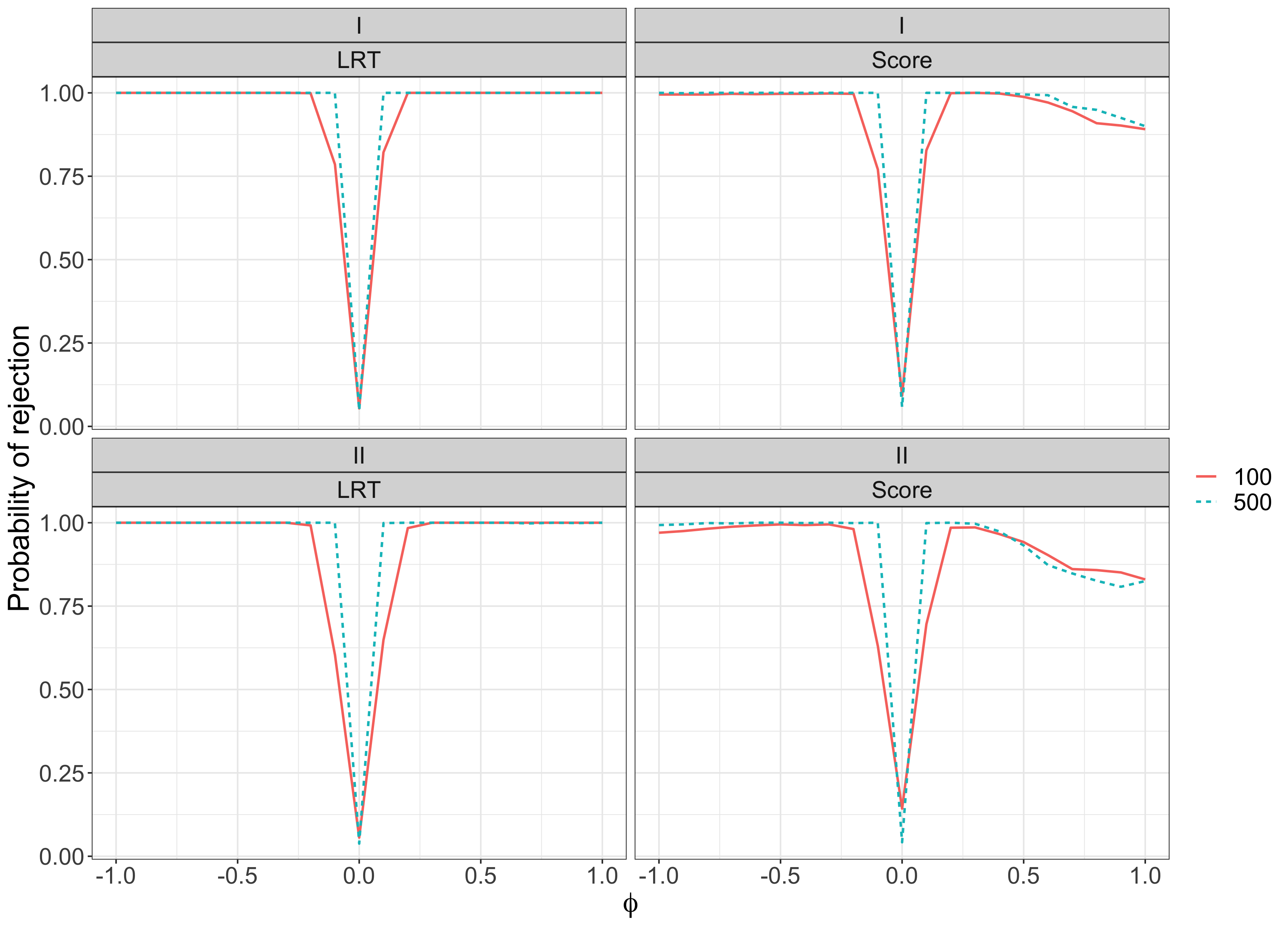}
	\caption{Power of the likelihood ratio test (LRT) and score test as function of $\phi$ with data generated under Scenarios I and II and sample sizes $n=100$ (solid line) and $n=500$ (dashed line).}\label{LRT_Score}
\end{figure}

Both the likelihood ratio and score tests demonstrate the ability to reject the null hypothesis (power) with a high probability when $\phi\neq0$. However, the score test suffers from numerical problems at high positive values of $\phi$ as can be observed from Figure \ref{LRT_Score}. This issue arises from numeric differentiation employed to calculate the Hessian matrix, causing the rejection probability to decrease in this region. Notably, this is more severe in Scenario II where the number of parameters increases.


\section{Bivariate hepatitis count data analysis}\label{application}

\subsection{BCP-INGARCH modeling}

In this section, the proposed methodology is applied to the confirmed monthly cases of viral hepatitis recorded at two nearby Brazilian cities. Hepatitis is an inflammation of the liver, most commonly caused by a viral infection. Symptoms can take some time to develop, only manifesting after the liver function has been affected. In Brazil, the most common types of hepatitis are A, B, and C. The data is made available by the Brazilian public healthcare system SUS through the site \url{https://datasus.saude.gov.br} (DATASUS platform). It currently comprises the period of 2001 to 2018, giving a total of $n=216$ observations per city. We analyze the data of Brazil's capital Bras\'ilia, which is located in the Federal District within Goi\'as state. Due to close proximity, it is natural to expect that Bras\'ilia's counts are correlated to the Goi\'as's capital, Goi\^ania. The goal here is the joint modeling and prediction of the monthly counts of hepatitis cases in Bras\'ilia and Goi\^ania. The empirical Pearson's correlation between series is 0.50. As we will discuss in Subsection \ref{comparison_app}, the existing bivariate INGARCH models cannot handle this problem due to the constrained parameter space of $\phi$ (see also discussion in Subsection \ref{other_ingarch}).

The geographic locations of Bras\'ilia and Goi\^ania are shown on the right side of Figure \ref{brasilia_goiania}. Time series of viral hepatitis confirmed cases in each city are displayed on the left side of Figure \ref{brasilia_goiania}, showing that the pair tends to be correlated over time. One of the benefits of modeling the cross-correlation is that the current data in one city can be used to predict the future of another city. This situation will be illustrated in Subsection \ref{cond_prediction}.

\begin{figure}
	\centering
	\includegraphics[width=0.85\linewidth]{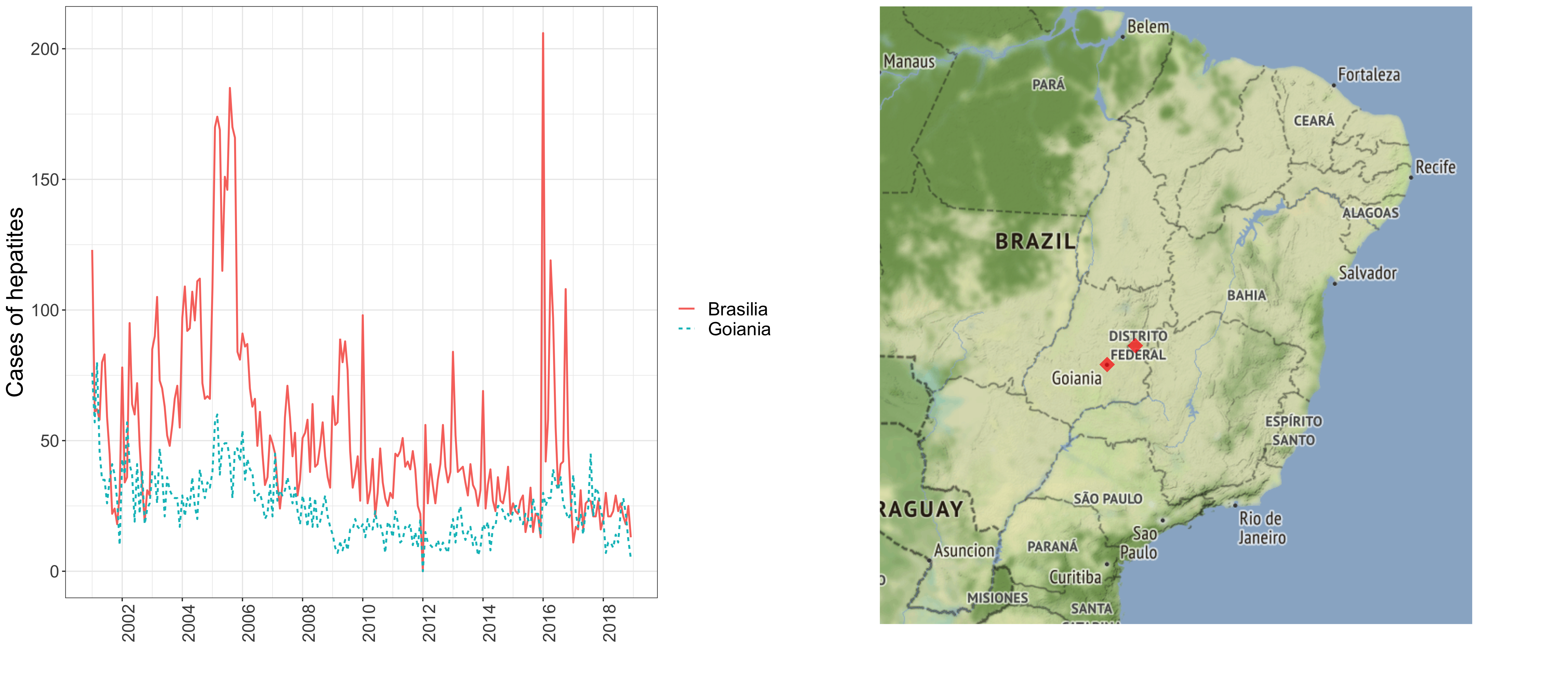} 
	\caption{On the left, monthly counts of viral hepatitis confirmed cases in the Brazilian cities of Bras\'ilia and Goi\^ania from 2001 to 2018. Empirical Pearson's correlation between series is 0.50. On the right, geographical representation of Brazilian cities Bras\'ilia and Goi\^ania.}\label{brasilia_goiania}
\end{figure}

The lagged relationship among the series is explored in Figure \ref{lead_lag}, where we assess how the counts of hepatitis of one city at time $t-1$ are related to those of the other city at time $t$. The scatter plots suggest a positive correlation between Goi\^ania$(t-1)$ and Bras\'ilia$(t)$ as well as among Goi\^ania$(t)$ and Bras\'ilia$(t-1)$. This conveys that the non-diagonal BCP-INGARCH fit should be considered in our analysis. In this case, the matrix {\bf B} permits non-zero $\beta_{12}$ and $\beta_{21}$ coefficients which means that lagged counts of one series may affect the future marginal mean of another (see Equation (\ref{BCP_equation})).

\begin{figure}
	\centering
	\includegraphics[width=0.7\textwidth]{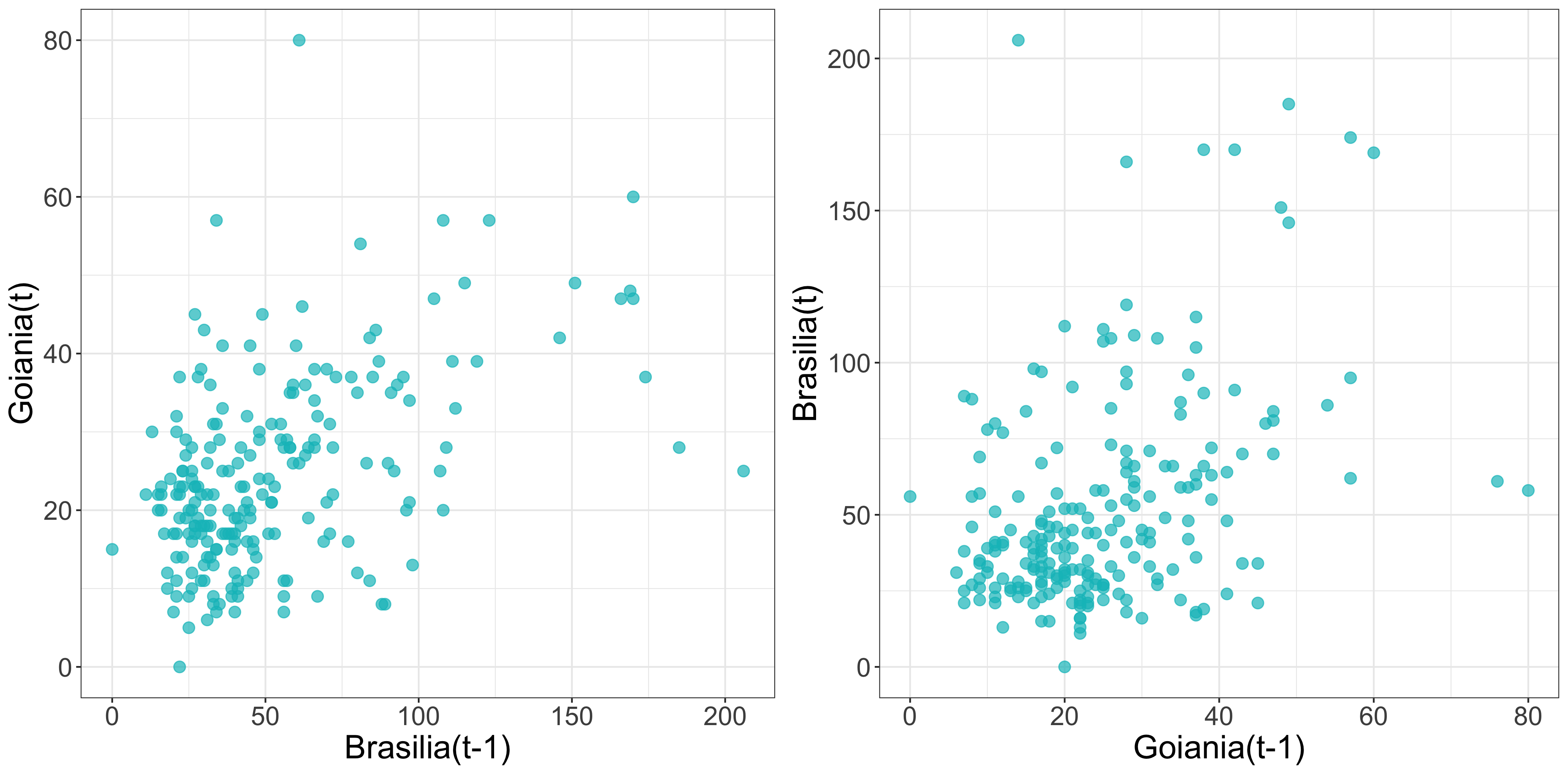}
	\caption{Lagged relationship between monthly counts of hepatitis at Bras\'ilia and Goi\^ania. On the left, Goi\^ania counts at time $t$ are plotted versus Bras\'ilia counts at $t-1$. Similarly on the right, with Bras\'ilia as the leading component.}\label{lead_lag}
\end{figure}

To fit the BCP-INGARCH model a data-driven approach is taken for choosing how to assign the observed count time series to $\{Y_{1t}\}$ and $\{Y_{2t}\}$ (remember that $Y_{1t}|\mathcal F_{t-1}$ and $Y_{2t}|\mathcal F_{t-1}$ are marginally Poisson and mixed Poisson distributed). Since the second component $Y_{2t}$ has more sources of overdispersion than $Y_{1t}$, the empirical dispersion index $D$ (empirical variance divided by the empirical mean) is calculated for the two series, and that with a higher value of $D$ is assigned to $Y_{2t}$. We obtain $D =  23.89$ for Bras\'ilia counts and $D = 6.45$ for Goi\^ania, hence the pair $Y_{1t}$ and $Y_{2t}$ denote the monthly count of hepatitis cases of Goi\^ania and Bras\'ilia, respectively, for $t=1,\ldots,216$.

Diagonal (fit 1) and non-diagonal (fit 2) BCP-INGARCH processes are fitted to the data. The parameter estimates and standard errors of the BCP-INGARCH model parameters are reported in Table \ref{model_fit}. Standard errors are obtained via parametric bootstrap with 500 replicas since this method is more reliable for small sample sizes. Both fits indicate a positive serial auto-correlation in both count time series and a high cross-correlation is evidenced by the ``small" estimate value of $\widehat{\phi}$. Moreover, the estimated contemporaneous correlation (conditional on the past) can be obtained by calculating the BCP correlation (Equation (\ref{correlation})) for every $t$ with the CML parameter estimates. This is displayed in Figure \ref{cross_corr_plot}, where a decreasing tendency of cross-correlation is shown, although peaks occur around 2005 and 2016. These are periods with pronounced peaks of cases in Bras\'ilia, associated with an increase in Goi\^ania. The conditional (on the past) contemporaneous-correlation estimated by the diagonal and non-diagonal BCP-INGARCH model fits are very similar and mostly overlap.

\begin{figure}
	\centering
	\includegraphics[width=0.7\textwidth]{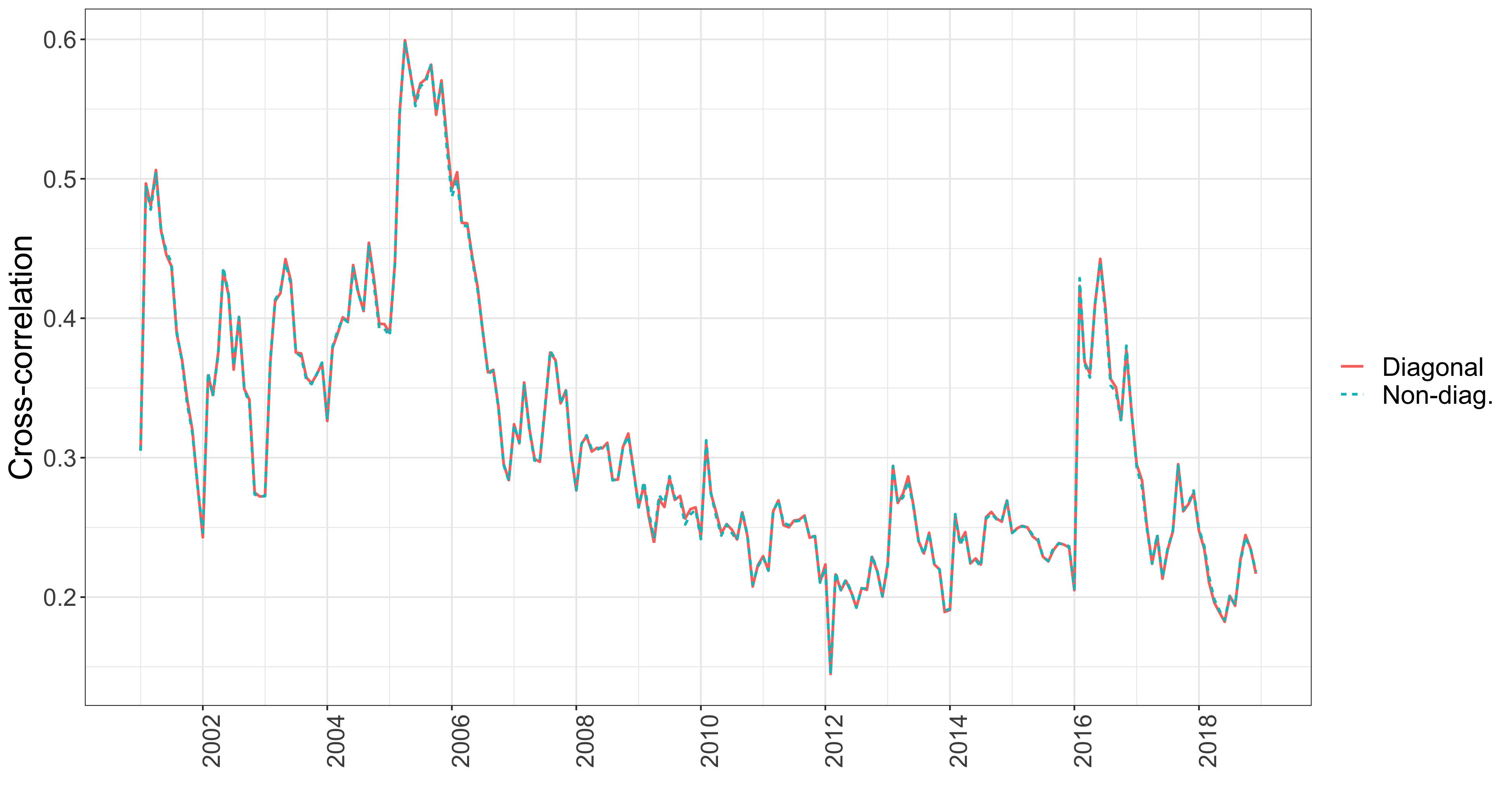}
	\caption{Dynamics of the conditional contemporaneous correlation between count time series of confirmed hepatitis cases at Bras\'ilia and Goi\^ania estimated by the BCP-INGARCH(1,1) model fits.}\label{cross_corr_plot}
\end{figure}

We test for the presence of cross-correlation ($H_0: \phi=0$ versus $H_1: \phi \neq 0$) through the likelihood ratio and score tests discussed in Section \ref{hyp_test} for the diagonal and non-diagonal BCP-INGARCH models. The likelihood ratio test produces test statistics (and $p$-values in parenthesis) of 68.06 ($1 \times 10^{-16}$) and 61.02 ($5 \times 10^{-15}$) for the diagonal and non-diagonal fits, respectively. Those due to the score test are 69.13 ($1\times 10^{-16}$) and 66.48 ($3\times 10^{-16}$). Both tests strongly reject the null hypothesis, so we can conclude that there is a statistically significant contemporaneous correlation between the count time series.

This high cross-correlation is expected due to the close proximity of the cities and relates directly to the forms of transmission of viral hepatitis. The main kinds of transmission of hepatitis type B and C are sexual contact and parenteral form, respectively, while for hepatitis A contagion is a predominantly fecal-oral route. Hence, one possible explanation for a high cross-correlation among the confirmed cases of the disease at Bras\'ilia and Goi\^ania is the sharing of food and water sources, that relate directly to the transmission of hepatitis A. From an epidemiological perspective, it would be interesting to repeat this study considering separately the hepatitis cases of each type and exploring whether the contemporaneous behavior maintains as strong when modeling the three different classes independently.

\begin{table}[]
	\centering
	\begin{tabular}{@{}lccccc@{}}
		\toprule
		& \textbf{$\alpha_{11}$}            & \textbf{$\alpha_{22}$}            & \textbf{$\beta_{11}$}             & \textbf{$\beta_{12}$}             & \textbf{$\beta_{21}$}             \\ \midrule
		Fit 1 & 0.466 (0.140)                     & 0.482 (0.147)                     & 0.430 (0.074)                     & $-$ ($-$)                                 & $-$ ($-$)                                 \\
		Fit 2 & \multicolumn{1}{l}{0.497 (0.134)} & \multicolumn{1}{l}{0.448(0.137)}  & \multicolumn{1}{l}{0.398 (0.061)} & \multicolumn{1}{l}{0.002 (0.017)} & \multicolumn{1}{l}{0.019 (0.039)} \\ \midrule
		& \textbf{$\beta_{22}$}             & \textbf{$\omega_{1}$}             & \textbf{$\omega_{2}$}             & \textbf{$\phi$}                   &                                   \\ \midrule
		Fit 1 & \multicolumn{1}{l}{0.384 (0.066)} & \multicolumn{1}{l}{2.310 (1.948)} & \multicolumn{1}{l}{6.519 (5.419)} & \multicolumn{1}{l}{0.010 (0.002)} & \multicolumn{1}{l}{}              \\
		Fit 2 & 0.399 (0.065)                     & 2.216 (1.881)                     & 7.012 (4.814)                     & 0.010 (0.002)                     &                                   \\ \bottomrule \vspace{0.3cm}
	\end{tabular} 
	\caption{Parameter estimates and standard errors (under parenthesis) of the BCP-INGARCH(1,1) model fits to the hepatitis data in the Brazilian capitals of Goi\^ania($y_1$) and Bras\'ilia($y_2$). Fits 1 and 2 correspond to different setting of {\bf B} matrix as diagonal or non-diagonal, respectively. Standard errors are obtained from parametric bootstrap with 500 replicas.}\label{model_fit}
\end{table}

One way to select between the diagonal and non-diagonal alternatives is through information criteria. We consider the AIC and BIC as selection criteria and both of them indicate that the diagonal model is preferred over the non-diagonal model. The model information criteria suggest that a leading/lagging relationship among the counts is not statistically significant. In other words, the effects of lagged Bras\'ilia counts in Goi\^ania and vice-versa are not statistically different than zero. This implies that cross-correlation is purely contemporaneous, and not driven by a leading/lagging relationship among the count time series.

\subsection{Out-of-sample prediction}\label{oos}

Although the diagonal fit is preferable from a model selection perspective, the conclusion can be different if the main focus of the practitioner is on prediction. In this subsection, we evaluate the out-of-sample prediction due to the diagonal and non-diagonal BCP-INGARCH processes. More specifically, we consider the one-step-ahead forecast performed 100 times recursively. This mimics the situation where data is collected from both cities and, at each month, used to predict the outcome of the following. 

A point prediction for $\widehat{\bf Y}_{t+1}$ is given by the joint mode of the distribution $\mbox{BCP}(\widehat{\lambda}_{t,1}, \widehat{\lambda}_{t,2}, \widehat{\phi})$, with $\widehat{\boldsymbol{\lambda}}$ estimated using the CMLEs. We start fitting the model with the observed values until August 2010 $(t = 116)$ (${\bf Y}_t$), which is used to predict September 2010 $(t = 117)$ (${\widehat{\bf Y}_{t+1}}$). Once completed, the model is refitted incorporating the true data up to $t = 117$. This is carried until the last observed data point, resulting in a 100 out-of-sample one-step-ahead set of predictions for Goi\^ania and Bras\'ilia. We evaluate the results through the root mean-square forecasting error (RMSFE) loss function
$\mbox{RMSFE}_{t\,i} =  \sqrt{ \frac{1}{t-n_0} \sum_{s = n_0 +1}^{t} (Y_{si} - \widehat{Y}_{si})^2 }$, for $i=1,2$ and $t=n_0 + 1,\ldots, 216$, with $n_0 = 116$. 

\begin{figure}
\centering
\includegraphics[width=0.8\textwidth]{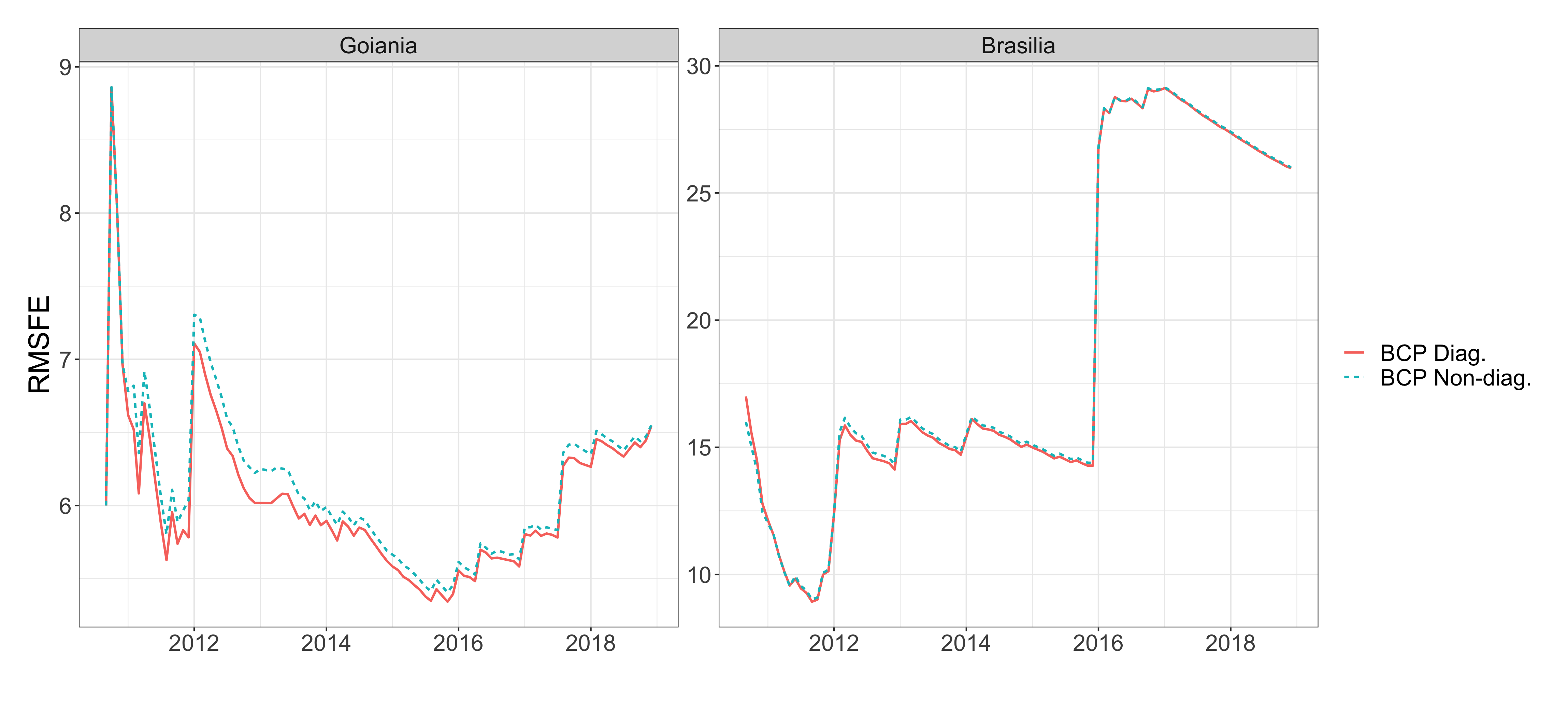}
\caption{Root mean-square forecasting error (RMSFE) versus time for Goi\^ania and Bras\'ilia obtained with the diagonal (solid line) and non-diagonal (dashed line) BCP-INGARCH processes.}\label{msfe_joint}
\end{figure}

The results given in Figure \ref{msfe_joint} show that the RMSFE due to the diagonal and non-diagonal BCP-INGARCH(1,1) processes are similar for Bras\'ilia under the entire forecasting period but the preferred model for Goi\^ania is consistently the diagonal option. Table \ref{pred_table} provides a numerical quantification of the overall forecasting period via the root-mean-squared error (RMSE) and mean absolute error (MAE) of the 100 one-step-ahead predictions.

\begin{table}[]
	\centering
\begin{tabular}{@{}clcc@{}}
	\toprule
	\textbf{City}                                & \multicolumn{1}{c}{\textbf{Model}} &\textbf{RMSE}   & \textbf{MAE} \\ \midrule
	\multirow{3}{*}{Bras\'ilia}                    & Diag. BCP-INGARCH                  & 25.975 & 14.670       \\
	& Non-diag. BCP-INGARCH              & 26.016 & 14.770       \\ \midrule
	\multicolumn{1}{l}{\multirow{3}{*}{Goi\^ania}} & Diag. BCP-INGARCH                  & 6.543  & 5.070        \\
	\multicolumn{1}{l}{}                         & Non-diag. BCP-INGARCH              & 6.549  & 5.110        \\ \midrule
\end{tabular}
\caption{RMSE and MAE of 100 one-step-ahead predictions of confirmed hepatitis cases in Bras\'ilia and Goi\^ania from BCP-INGARCH(1,1) processes. }\label{pred_table}
\end{table}

For both cities, the BCP-INGARCH process resulting in the lowest forecasting error was the diagonal option, which is in accordance with the model selection indicated previously.  A proper prediction of hepatitis cases as provided by our model is crucial for the planning of public health resources, where an accurate forecast will help guide the allocation of public resources to medication purchases, hospital beds, and vaccination, for example.

\subsection{Conditional prediction}\label{cond_prediction}

Another advantage of the proposed model is that it can be used to perform conditional prediction for one of the time series given that the value of the other is known. An example of an application where this is of interest is if one of Goi\^ania or Bras\'ilia is more efficient in reporting their number of confirmed hepatitis cases. We can incorporate this information to predict the value for the other city where reporting was late. 

Here we illustrate how this extra piece of information can help improving prediction for the city of interest in contrast to jointly forecasting the pair. A similar out-of-sample one-step-ahead prediction as in Subsection \ref{oos} is used for this purpose and repeated 100 times as before. The difference now is that we assume that the city we have extra information on is assigned to $\{Y_{1t}\}$ (Goi\^ania), and the mode of a Poisson distribution with parameter $\widehat{\lambda}_{2t} \exp\{-\widehat{\lambda}_{1t} (e^{\widehat{\phi}} -1)+\widehat{\phi} Y_{1\,t+1}\}$ provides the prediction of $Y_{2\,t+1}$ (this follows from the definition of a BCP distribution). Following what was done in Subsection \ref{oos}, Bras\'ilia is assigned to $\{Y_{2t}\}$ and counts of Goi\^ania are used to make the conditional prediction.

\begin{table}[]
	\centering
	\begin{tabular}{@{}clcccc@{}}
		\toprule
		\multirow{2}{*}{\textbf{City}}               & \multicolumn{1}{c}{\multirow{2}{*}{\textbf{Model}}} & \multicolumn{2}{c}{\textbf{RMSE}}                                             & \multicolumn{2}{c}{\textbf{MAE}}                                              \\ \cmidrule(l){3-6} 
		& \multicolumn{1}{c}{}                                & \multicolumn{1}{l}{\textbf{Joint}} & \multicolumn{1}{l}{\textbf{Conditional}} & \multicolumn{1}{l}{\textbf{Joint}} & \multicolumn{1}{l}{\textbf{Conditional}} \\ \cmidrule(r){1-2}
		\multirow{2}{*}{Bras\'ilia}                    & Diag. BCP-INGARCH                                   & 25.975                             & 25.876                                   & 14.67                              & 14.55                                    \\ \vspace{0.2cm}
		& Non-diag. BCP-INGARCH                               & 26.016                             & 25.901                                   & 14.77                              & 14.60                                    \\  \midrule
	\end{tabular}
\caption{RMSE and MAE of 100 one-step-ahead conditional and joint predictions of confirmed hepatitis cases in Bras\'ilia under the diagonal and non-diagonal BCP-INGARCH(1,1) processes. }\label{cond_table}
\end{table}

In Table \ref{cond_table}, we mimic the scenario where Goi\^ania is the city that discloses the number of cases first. To facilitate comparison, we include RMSE and MAE obtained from the jointly forecasting the pair $(Y_1, Y_2)$, as done in Subsection \ref{oos}. As expected, the conditional prediction yields a reduction in the prediction error in comparison to jointly forecasting the pair of counts since more information is incorporated.

\subsection{Comparison to other bivariate INGARCH models}\label{comparison_app}

We aimed to compare the results from the proposed BCP-INGARCH model to the existing bivariate INGARCH models discussed in Subsection \ref{other_ingarch}. However, we encountered problems and limitations with the models by \cite{liu2012} and \cite{cuizhu2018} that made them unfit for the application of interest in this paper, as discussed below.

Fitting the bivariate Poisson model by \cite{cuizhu2018} to the hepatitis count data of Goi\^ania and Bras\'ilia indicated that this model can be very sensitive to initial values of the cross-correlation parameter. For a careful assessment of our implementation and maximization routine, we tried to replicate the model fit to the weekly number of syphilis cases in Pennsylvania and Maryland from 2007 to 2010, reported by \cite{cuietal2019}. This data set is publically available through \texttt{R} package \texttt{ZIM} (\hyperlink{https://cran.r-project.org/web/packages/ZIM/ZIM.pdf}{https://cran.r-project.org/web/packages/ZIM/ZIM.pdf}). 

Our analysis revealed that the contribution of $\phi$ to the likelihood is small relative to the other parameters in the model, causing the estimate of $\phi$ to barely move away from its initialization. For instance, starting from $\phi_0 = (-0.9, -0.5, 0. 0.5, 0.9)$ resulted in the CMLEs $\widehat{\phi} = (-0.89999, -0.50009, 0.00068, 0.50030, 0.90002)$. The maximized log-likelihood up to proportionally for these five different initializations were $(381.31, 385.25, 387.46, 388.36, 388.52)$. If proportionally terms are considered, the difference in log-likelihood values is even more subtle, justifying the difficulty in finding the CML estimate for this parameter. 

In addition, some inconsistencies with the results reported by the authors are the following.

1. The log-likelihood up to proportionality evaluated at the CMLEs in Table 9 by \cite{cuietal2019} is 388.14. This shows that the maximization procedure adopted by the authors was unable to find the CMLEs.

2. The point estimate of $\phi$ is positive and equal to 0.7468. It is expected that $\phi$ assumes a negative value since the sign of this parameter determines the sign of the cross-correlation. The empirical Pearson's correlation for this data set is $-0.1355$.

3. It is unclear if the stationarity and ergodicity condition in Assumption \ref{RCp} was met for the CMLEs. In practice, it is common to verify such condition with $p=1$, in which case it is not satisfied for the CMLEs of ${\bf A} = (\alpha_1,\alpha_2)$ and ${\bf B} = (\beta_1, \beta_2)$ in Table 9.

In our view, the estimation of $\phi$ could be improved by performing the maximization in two steps, first estimating the parameters related to the conditional mean and $\phi$ subsequently, but this is outside of our scope.

We also tried fitting the model by \cite{liu2012} and \cite{leeetal2018} to the hepatitis data, but encountered numerical issues and were unable to achieve convergence to the CMLEs. This is likely related to the limited range of correlation of this model, where a cross-correlation as high as in this application is not supported.  As discussed in Subsection \ref{other_ingarch}, the maximum value of $\phi$ (hereby $\phi_{max}$) is given by $\min(a_1, a_2)$, where $(a_1, a_2)^T = (\boldsymbol{I} - \boldsymbol{A})^{-1} \boldsymbol{\omega}$.  We can get a rough idea of the upper bound for $\phi$ by employing the estimates from the BCP-INGARCH process. Since the INGARCH models share the same specification of the conditional mean $\boldsymbol{\lambda_t}$, it is reasonable to expect that the estimates of $\boldsymbol{A}$, $\boldsymbol{B}$ and $\boldsymbol{\omega}$ will not differ greatly. This gives us an approximate upper limit $\phi_{max} \approx 4.3$. We can investigate the cross-correlation over time achieved by the model by \cite{liu2012} and \cite{leeetal2018} by calculating the correlation due to their baseline bivariate Poisson distribution with $\phi_{max}$ and the BCP-INGARCH fitted $\boldsymbol{\widehat{\lambda}}$. This gives us that the maximum contemporaneous correlation for the study period is 0.29. Meanwhile, this is a high as 0.6 under our model, where values above 0.29 are frequently encountered, as illustrated in Figure \ref{cross_corr_plot}.

The data application explored in this paper evidenced how the existing bivariate INGARCH models can be unfit for practical problems, producing inconsistent or no results at all. This motivates further the introduction of the BCP-INGARCH process, which not only supports a broad range of positive and negative cross-correlation but also has a simple likelihood specification that helps to avoid problems that arise from numerical maximization.

\section{Concluding remarks}\label{conclusion}

We developed a novel bivariate conditional Poisson INGARCH process for modeling correlated count time series data having as the main advantage of its capability of capturing a wide range of contemporaneous correlation. This flexibility is important since bivariate/multivariate count time series data is prevalent in many fields and that there is a lack of flexible bivariate models based on the INGARCH approach, which is a relevant tool for dealing with univariate count time series. We here showed that it is possible to construct promising models based on such an approach.

The stability theory of our bivariate count process was established. Through simulation studies, we demonstrated that the proposed conditional maximum likelihood estimation works well and evaluated different methods of obtaining parameter standard errors.  The simulation studies showed that the parametric bootstrap is preferred for small sample sizes, but asymptotic alternatives work well with moderate or large samples. Asymptotic properties of the estimators were also derived. Hypothesis testing for the presence of cross-correlation under our model was presented and evaluated through likelihood ratio and score tests, demonstrating the power of such tests. Finally, the proposed methodology was employed in an application to counts of hepatitis cases at nearby Brazilian cities. The series showed to be highly positively correlated and modeling the data jointly was successfully done through our proposed model. The limitation of some bivariate INGARCH models was discussed both theoretically and empirically. 

We now discuss some possible points for future research. As demonstrated along with this paper, a key ingredient to propose a bivariate INGARCH process being mathematically tractable and having flexible contemporaneous-correlation structure relies on the baseline bivariate count distribution. For example, other models can be proposed by assuming $Z_1\sim\mbox{NB}(\lambda_1,\sigma)$ (see beginning of Section \ref{def_prop}) if an exceeding overdispersion needs to be accounted for the count time series $\{Y_{1t}\}$, where $\mbox{NB}(\lambda_1,\sigma_1)$ stands for a negative binomial distribution with mean $\lambda_1$ and dispersion parameter $\sigma_1$. In a similar fashion, by assuming $Z_2|Z_1=z_1\sim\mbox{NB}(\mu_2 e^{\phi z_1},\sigma_2)$, with $\mu_2$ as defined in Section \ref{def_prop}, we can account for a wider range of overdispersion related to $\{Y_{2t}\}$. Other alternatives to the negative binomial assumption like COM-Poisson, zero-inflated/deflated count models, or Poisson inverse-Gaussian distributions can be considered for attacking underdispersion, overdispersion, zero-inflation/deflation, and heavy-tailed counts, just to name a few. We call the attention that in any of these extensions, many of the developed methodology given in this paper can be straightforwardly adapted. It is worth to mention that an \texttt{R} package is being finished and will be available soon, which is useful for practitioners and applied statisticians as well as for comparison purposes of new emerging multivariate count time series methods. 

Another important point to be addressed is a multivariate extension allowing for a higher dimension rather than 2. The model proposed by \cite{foketal2020} allows for dealing with $d\geq2$ correlated count time series, which is an attractive feature over the existing multivariate INGARCH models (including our proposed process). We hope to address this problem in a future paper. Other topics deserving further research are (a) inclusion of covariates, which can be done via a log-linear structure like in \cite{foketal2020}; (b) $\mbox{BCP-INGARCH}(p,q)$ high-order extension with $p,q\geq1$; and (c) non-linear BCP-INGARCH generalization.

\section*{Acknowledgements}
\noindent  L.S.C. Piancastelli thanks the financial support of Science Foundation Ireland under Grant number 18/CRT/6049. W. Barreto-Souza and H. Ombao would like to acknowledge the financial support by KAUST Research Fund and NIH 1R01EB028753-01. W. Barreto-Souza also thanks to the Conselho Nacional de Desenvolvimento Cient\'ifico e Tecnol\'ogico (CNPq-Brazil, grant number 305543/2018-0).

\end{document}